\def\BibTeX{{\rm B\kern-.05em{\sc i\kern-.025em b}\kern-.08em
    T\kern-.1667em\lower.7ex\hbox{E}\kern-.125emX}}
\newtheorem{theorem}{Theorem}
\newtheorem{proposition}{Proposition}
\newtheorem{lemma}{Lemma}
\newtheorem{definition}{Definition}
\begin{document}

%\title{On the Relation Between Information Structures in Decentralized Detection: the State-Dependent Case.\\
\title{Causal Link Discovery with Unequal Edge Error Tolerance  \vspace*{-0.2in}\\
\thanks{This work has been funded in part by one or more of the following grants: ARO W911NF1910269,
ARO W911NF2410094,
DOE DE-SC0021417,
Swedish Research Council 2018-04359,
NSF CCF-2008927,
NSF RINGS-2148313,
NSF CCF-2200221,
NSF CCF-2311653,
ONR 503400-78050,
ONR N00014-22-1-2363,
NSF A22-2666-S003, and the
NSF Center for Pandemic Insights DBI-2412522
}
\thanks{Portions of this paper were presented at the 2024 International Symposium on Information Theory (ISIT) and the 2024 Asilomar Conference on Signals, Systems, and Computers.}
}
\author{\IEEEauthorblockN{Joni Shaska}
\IEEEauthorblockA{
\textit{University of Southern California}\\
shaska@usc.edu}
\and
\IEEEauthorblockN{Urbashi Mitra}
\IEEEauthorblockA{
\textit{University of Southern California}\\
ubli@usc.edu }
}
\maketitle
\thispagestyle{plain}
\pagestyle{plain}
\vspace*{-0.5in}

\begin{abstract} This paper proposes a novel framework for causal discovery with asymmetric error control, called \textit{Neyman-Pearson causal discovery}. Despite the importance of applications where different types of edge errors may have different importance, current state-of-the-art causal discovery algorithms do not differentiate between the types of edge errors, nor provide any finite-sample guarantees on the edge errors. Hence, this framework seeks to minimize one type of error while keeping the other below a user-specified tolerance level. Using techniques from information theory, fundamental performance limits are found, characterized by the R\'enyi divergence, for Neyman-Pearson causal discovery. Furthermore, a causal discovery algorithm is introduced for the case of linear additive Gaussian noise models, called $\epsilon-CUT$, that provides finite-sample guarantees on the false positive rate, while staying competitive with state-of-the-art methods.
\end{abstract}

\begin{IEEEkeywords}
causal inference, finite-sample results, error control
\end{IEEEkeywords}

\section{Introduction}
Understanding the underlying causes of phenomena affected by multiple variables can often be done via the representation of causal graphs \cite{pearl2009causality}.  These graphs are often assumed to be directed acyclic graphs.  Applying causal graph discovery can have utility in disciplines as diverse as topology inference in wireless networks \cite{testi2020blind}, gene networks in biology  (e.g. \cite{THOR}), the impact of medications, and optimizing the effect of advertising \cite{VANDENBROECK2018470}. As a result, there is a wide range of research regarding causal graph discovery, such as constraint-based methods like  the PC and fast-PC algorithms \cite{spirtes2000causation,sprites_book_2017}, score-based methods \cite{NOTEARS,DAGMA,DAGMA-DCE,equivariance,lasso,Golem}, and discovery methods for time-series data and state-spaces \cite{crossmaps_causal,group_sparsity_causal,sparse_bayesian_timeseries,state-space_causal}. However, despite modern applications in which one type of error is less favorable, most algorithms, with only a few notable exceptions \cite{lasso,group_sparsity_causal}, do not distinguish between different types of edge errors. Furthermore, most recoverability results in the literature are asymptotic in the number of samples and do not provide any guarantees on error rates in a finite sample setting. 

Hence, the purpose of the paper is two-fold. First, we introduce \textit{Neyman-Pearson causal discovery}, which seeks to minimize one type of edge error while keeping the other below a fixed user-specified threshold. Second, we develop an algorithm for the case of linear additive Gaussian noise models called $\epsilon-CUT$, which stands for \textit{\underline{C}ausal discovery with \underline{U}nequal edge error \underline{T}olerance}, which keeps the false positive rate below a user-specified tolerance level for any number of samples.
\par
Although work on confidence regarding causal estimates has recently started to emerge, much of this work either focuses on confidence regarding causal effects (rather than direct edges) between nodes or focuses on time-series data. For instance, \cite{schaar_information_rates} derives performance limits for causal effect estimation on individuals, rather than populations and \cite{StriederDrtonconfidence} constructs confidence intervals for causal effects in the linear additive Gaussian setting. For time-series data, \cite{HayekLower} derives converse bounds for the achievable false positive and false negative rates, and \cite{group_sparsity_causal} derives consistency conditions based on a false connection score, similar to our false positives rate.
\par
In contrast, our work deals specifically with the error rates for individual edges between nodes. In particular, we formulate an optimization problem that seeks to minimize the false negative rate of the declared edges, while keeping the false positive rate below a user-specified threshold. We derive fundamental performance limits for our framework and show that our proposed algorithm, $\epsilon-CUT$,satisfies the false-positive constraint for any number of samples.
\par
To provide finite-sample guarantees on the false positive rate, we consider the setting of linear structure equation models (LSEMs) with additive Gaussian noise, which has been extensively studied in the literature \cite{equivariance,schaar_unobserved_linear, StriederDrtonconfidence, Chen_2019_ordering}. Prior algorithms in this setting, such as AReCI \cite{residuals_latent} and ReCIT \cite{Zhang_Zhou_Guan_residuals}, rely on \textit{the principle of independent mechanisms} \cite{sprites_book_2017}, which assumes that the causes and mechanisms producing the effect (i.e., the specific functional form) are independent. Then, \cite{residuals_latent} and \cite{Zhang_Zhou_Guan_residuals}  provide independence tests by comparing estimation residuals.
%consider the differences between the residual sum of squares between nodes when regressed on potential parent sets, and test for independence between residuals. 
However, these algorithms only provide asymptotic guarantees on recoverability, which do not hold in our finite-sample setting. Our algorithm also considers the difference between residuals but differs in two important ways. The first key observation is the variance (or energy) of a child node is always higher than that of its parent, and the second is that the residuals follow a chi-squared distribution, enabling us to derive straightforward threshold tests that facilitate finite sample analysis and yield performance guarantees.

Our algorithm also differs from traditional methods typically using a sparsity constraint or penalty. For example, \cite{equivariance} uses the $\ell_0$ penalty for linear additive Gaussian models, \cite{sparse_bayesian_timeseries} enforces sparsity through a Bayesian prior based on Gaussian processes in a time-series setting, and \cite{group_sparsity_causal} uses a combination of $\ell_1$ and $\ell_2$ regularizers for inference in multivariate autoregression models. Continuous optimization formulations, such as NOTEARS \cite{NOTEARS}, DAGMA \cite{DAGMA}, and DAGMA-DCE \cite{DAGMA-DCE}, will typically invoke sparsity through either the $\ell_1$ or $\ell_2$ regularizers. Hence, the methods often require a hyper-parameter that must be selected, and other than a few notable exceptions such as \cite{lasso}, there is typically no understanding of how different hyper-parameter values will affect the error rate \textit{a priori}. In contrast, our work requires no tuning of hyper-parameters. Once the user has specified a tolerance level, all parameters in our algorithm are defined, along with \textit{a priori} guarantees on the false positive rate for any number of given samples.

Our contributions are as follows:
\begin{enumerate}
    \item We pose edge detection in causal discovery as a collection of
     Neyman-Pearson problems. An aggregated false negative rate is minimized subject to a pre-specified tolerance on the aggregated false positive rate. Aggregation is over all edges in the causal graph.
    \item We derive the optimal detector for our framework and use it to derive upper and lower bounds on the achievable false negative rate for a given tolerance on the false positive rate (the optimal detector and bounds were given in \cite{ISITPaper} but without proof). Hence, we derive fundamental performance limits for our framework characterized by the R\'enyi divergence.
    \item We derive an algorithm for the setting of linear additive Gaussian noise models that seeks to minimize the false negative rate while keeping the false positive rate below a pre-specified tolerance level $\epsilon$. We call this method $\epsilon$-CUT.
     \item We show $\epsilon$-CUT's false positive rate will always satisfy the user-specified tolerance (the proof is absent in \cite{Asilomar}). This result requires no asymptotic assumptions on the convergence of distributions or consistency conditions and hence, is a finite-sample result. We investigate the performance of $\epsilon$-CUT compared to other causal discovery methods through a series of numerical experiments.
\end{enumerate}
The rest of this paper is organized as follows. Section \ref{Sec::problemformulation} introduces the setup and problem formulation, Section \ref{Sec::opt_limits} presents the optimal detector and fundamental performance limits for Neyman-Pearson causal discovery. Section \ref{Sec::CUT} introduces $\epsilon-CUT$. Section \ref{Sec::prioralgorithms} briefly summarized state-of-the-art methods that are used as benchmarks for the numerical studies. Section \ref{Sec::numerical} presents numerical results, and Section \ref{Sec::concl} concludes the paper. Much of our notation mimics that used in \cite{HayekLower,ISITPaper} and \cite{suncausalentropy}.

\section{Problem Formulation}\label{Sec::problemformulation}
%\vspace*{-0.05in}
\begin{figure}[t]
    \centering
    \vspace*{-0.2in}
    \includegraphics[width = 2.5in]{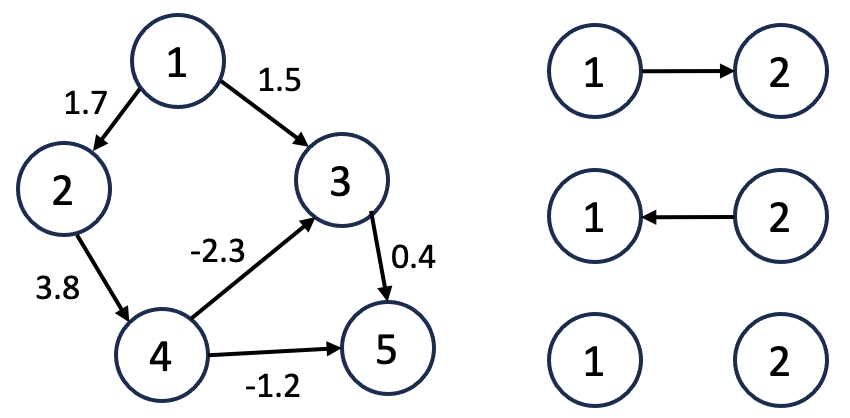}
    \caption{(L) An example, five-node, directed, acyclic graph with edge weights.  (R) Two-node graphs corresponding to the matrices $\boldsymbol{A}_0$, $\boldsymbol{A}_1$, and $\boldsymbol{A}_2$, respectively.}
    \label{fig:graphs}
\end{figure}
Consider a \textit{directed acyclic graph} (DAG) $\mathcal{G}$ with edge set $\mathcal{E}$, vertex set $\mathcal{V}$ with $\left|\mathcal{V}\right|=d$, and corresponding weighted adjacency matrix $\boldsymbol{A} \in \mathbb{R}^{d \times d}$. We assume a prior on the adjacency matrix, denoted by $\pi_{\boldsymbol{A}}$. We do this for two reasons. First, the prior will help facilitate analysis by allowing us to define key metrics. Second, having a prior on the adjacency matrix can help model prior knowledge from the application domain. We have a series of $n$ observations $\{X_k\}_{k=1}^n$ generated by the underlying distribution $P_{\boldsymbol{A}}$ conditioned on the realization of $\boldsymbol{A}$. Given a vertex $i\in\mathcal{V}$, let $X_k(i)$ be the $k$-th measurement $k=1,..,n$ of vertex $i$. Moreover, let $\mathcal{Z}(i)$ denote the parent set of vertex $i$. Assuming $\mathcal{Z}(i)$ is non-empty, let $i_j$ be the $j$-th parent of vertex $i$, $j=1,...,|\mathcal{Z}(i)|$. Define the vector
\begin{equation}
    Z_k(i) = [X_k(i_1),X_k(i_2),...,X_k(i_{|\mathcal{Z}(i)|})]^\top,
\end{equation}
i.e., the vector comprised of the $k$-th measurements of the parents of $i$. Moreover, we assume a linear structural equation model (LSEM), i.e.,
\begin{eqnarray}
    X_k & = & \boldsymbol{A}X_k + W_k,
    \label{eq:linmodel}
\end{eqnarray}
where the exogeneous input terms $\{W_k\}_{k=1}^n$ are zero-mean indenpendent Gaussian random variables with equal covariance $\sigma^2$ (which we assume is known). This assumption is commonly made in the literature \cite{equivariance,StriederDrtonconfidence,lasso,HayekLower}, and it is also worth underscoring that if one removes either the equivariance or the independence assumption, recoverability of $\boldsymbol{A}$ is no longer guaranteed even in the asymptotic case \cite{equivariance}.
 Given the underlying adjacency matrix $\boldsymbol{A}$, the goal is to recover the \textit{support} $\boldsymbol{\chi}$ of $\boldsymbol{A}$, where $\chi_{i,j}=1 \iff \boldsymbol{A}_{i,j}\neq 0$ \textit{or} $\boldsymbol{A}_{j,i}\neq 0$.
\subsection{Error Metrics}
We define the following error rates \footnote{It is important to highlight that these are in general \textit{not the probabilities of error}, and rather an averaging over errors of individual edges. To see this note that in equation \eqref{equivalenterrors} we can write out the rates in terms of individual edge error probabilities, but that the individual edge error events \textit{are not disjoint}, since we could simultaneously make errors on different edges.} which were first introduced in \cite{suncausalentropy} and further studied in \cite{HayekLower}, \cite{ISITPaper},
\begin{align}
\! \! \! \! \mbox{\textit{false positive rate}}:    \epsilon^+ &= \frac{\mathbb{E}\sum_{i,j}\mathbbm{1}\{\hat{\boldsymbol{\chi}}_{i,j}=1, \boldsymbol{\chi}_{i,j}=0\}}{\mathbb{E}\sum_{i,j}\mathbbm{1}\{\boldsymbol{\chi}_{i,j}=0\}} \label{e+} ,
    \\
\! \! \! \!  \mbox{\textit{false negative rate}}:     \epsilon^- &= \frac{\mathbb{E}\sum_{i,j}\mathbbm{1}\{\hat{\boldsymbol{\chi}}_{i,j}=0, \boldsymbol{\chi}_{i,j}\neq 0\}}{\mathbb{E}\sum_{i,j}\mathbbm{1}\{\boldsymbol{\chi}_{i,j}\neq 0\}}\label{e-} ,
\end{align}
 where the expectations are taken with respect to detector $\hat{\boldsymbol{\chi}}$ as well as the prior distribution on the matrix $\boldsymbol{A}$, $\pi_{\boldsymbol{A}}$. 
 %It is important to note that the conditions on the sums over the edges are dependent on the context. For example, if one is considered with time series data, the sums may contain self-loops, whereas if one makes the common acyclic assumption for non-time series data, then self-loops in the summation are omitted. Our problem statement is as follows. 
 We wish to find the detector $\hat{\boldsymbol{\chi}}$ that solves the following optimization problem.
\begin{equation}\label{NPCI}
\begin{aligned}
\inf_{\hat{\boldsymbol{\chi}}} \epsilon^- \; \; \; \; 
\textrm{s.t.} \;\; \epsilon^+ \leq \epsilon,
\end{aligned}
\end{equation}
where $0 < \epsilon < 1$. 
We underscore that the detector $\hat{\chi}$ captures the detection of {\bf all} edges in the causal graph, and that $\epsilon^+$ and $\epsilon^-$ are \textit{rates, not probabilities}; we shall provide a strategy for the detection of each individual edge, but our performance analysis will be over all edges.
%It is worth noting that these definitions of false positive and false negative rates have been previously considered by \cite{HayekLower} and \cite{suncausalentropy}.
\subsection{Definitions} 
Throughout this work, we will consider a binary hypothesis test for each individual edge. In particular, for an observed data set $\{X_k\}_1^n$, define the hypothesis testing problem
\begin{align}
    H_0: \boldsymbol{\chi}_{i,j} = 0, \qquad \{X_k\}_1^n \sim P_{i,j}\label{HypoTest0}\\
    H_1: \boldsymbol{\chi}_{i,j} = 1, \qquad \{X_k\}_1^n \sim Q_{i,j}\label{HypoTest1}
\end{align}
where $P_{i,j}$ denotes the density of $\{X_k\}_1^n$ conditioned on $\boldsymbol{\chi}_{i,j} = 0$ and $Q_{i,j}$ denotes the density of $\{X_k\}_1^n$ conditioned on $\boldsymbol{\chi}_{i,j} = 1$. Thus,  our problem is actually mixed in nature, since we assume a prior distribution $\pi_{\boldsymbol{A}}$ on the graph matrix $\boldsymbol{A}$; however we wish to control the aggregated error rate on the aggregated detector $\hat{\chi}$ as in a classical Neyman-Pearson-type detection problem.  
\begin{definition}
Assuming $\boldsymbol{A}\sim \pi_{\boldsymbol{A}}$, we define the following probabilities\label{errorprobs} and weights:
\begin{align}
    P^+_{i,j} &= \mathbb{P}(\hat{\boldsymbol{\chi}}_{i,j}=1|\boldsymbol{\chi}_{i,j}=0), \label{eq:P_plus}\\
    Q^-_{i,j} &= \mathbb{P}(\hat{\boldsymbol{\chi}}_{i,j}=0|\boldsymbol{\chi}_{i,j}=1).\\
    w^+_{i,j} &= \frac{\mathbb{P}(\boldsymbol{\chi}_{i,j}=0)}{\sum_{k,l}\mathbb{P}(\boldsymbol{\chi}_{k,l}=0)}, \\
    w^-_{i,j} &= \frac{\mathbb{P}(\boldsymbol{\chi}_{i,j}\neq0)}{\sum_{k,l}\mathbb{P}(\boldsymbol{\chi}_{k,l}\neq0)}. \label{eq:w_minus}
\end{align}
\hfill $\square$
\end{definition}
With Definition \ref{errorprobs}, we can substitute \eqref{eq:P_plus} - \eqref{eq:w_minus} into \eqref{e+} and \eqref{e-} and straightforwardly write out our error rates as,
    \begin{align}\label{equivalenterrors}
        \epsilon^+ = \sum_{i,j} w^+_{i,j}P_{i,j}^+, \qquad \epsilon^- = \sum_{i,j} w^-_{i,j}Q_{i,j}^-.
    \end{align}

An interesting observation is that for any edge $(i,j)$
\begin{align}
    \begin{split}
        &\mathbb{P}(\hat{\boldsymbol{\chi}}_{i,j}=1 \cap \boldsymbol{\chi}_{i,j} = 0)
        \\
        &\qquad \qquad \text{(Probability of false positive on edge } (i,j))
    \end{split}
     \\
     \begin{split}
         &\leq \mathbb{P}\bigg(\bigcup_{i\neq j}  \hat{\boldsymbol{\chi}}_{i,j}=1 \cap \boldsymbol{\chi}_{i,j} = 0 \bigg) \\
         &\ \qquad \qquad \text{(Probability of false positive on any edge)}
     \end{split}
    \\
    &\leq \sum_{i\neq j} \mathbb{P} ( \hat{\boldsymbol{\chi}}_{i,j}=1 | \boldsymbol{\chi}_{i,j} = 0 )\mathbb{P}(\boldsymbol{\chi}_{i,j} = 0) \\
    \begin{split}
        &= \bigg(\sum_{i\neq j}\mathbb{P}(\boldsymbol{\chi}_{i,j}= 0) \bigg) \sum_{i\neq j} \mathbb{P} ( \hat{\boldsymbol{\chi}}_{i,j}=1 | \boldsymbol{\chi}_{i,j} = 0 )
        \\
        & \qquad \qquad \qquad \qquad \qquad \qquad \frac{\mathbb{P}(\boldsymbol{\chi}_{i,j} = 0)}{\bigg(\sum_{i\neq j}\mathbb{P}(\boldsymbol{\chi}_{i,j} = 0) \bigg)}
    \end{split}
    \\
    &= \bigg(\sum_{i\neq j}\mathbb{P}(\boldsymbol{\chi}_{i,j}= 0) \bigg) \sum_{i,j} w^+_{i,j}P_{i,j}^+ = \epsilon^+\sum_{i\neq j}\mathbb{P}(\boldsymbol{\chi}_{i,j}= 0).
\end{align}
Since $\sum_{i\neq j}\mathbb{P}(A_{i,j}= 0)$ is a constant (for a fixed graph prior), we see that our false positive rate is an upper bound (up to a constant) for both the false positive probability on a given edge and the false positive probability on any edge. Hence, controlling the false positive rate also controls the false positive probability. Moreover, the false positive rate is often more computationally tractable than the probability of a false positive on any edge, so one may think of the false positive rate as a computationally tractable relaxation of the false positive probability.

\section{Fundamental Limits and Optimal Detector}\label{Sec::opt_limits}
Herein We derive the optimal detector for Problem \eqref{NPCI} which consists of a series of likelihood ratio tests between the distributions $P_{i,j}$ and $Q_{i,j}$ for each pair of vertices $i,j$. We derive upper and lower performance bounds for the optimal detector and thus the Neyman-Pearson causal discovery problem of \eqref{NPCI}; both bounds will described by the R\'enyi divergence.
%Hence, we derive fundamental performance limits for the Neyman-Pearson causal discovery problem given in \eqref{NPCI}.}
\par
\begin{theorem}\label{optimalrule}
Assume $\boldsymbol{A}\sim \pi_{\boldsymbol{A}}$. Let $U$ be a uniform random variable on $[0,1]$. Then, the optimal detector $\hat{\boldsymbol{\chi}}^*$ for Problem \ref{NPCI} under the prior $\pi_{\boldsymbol{A}}$ is given as follows, where the per edge detector is given by,
\begin{equation}
    \hat{\boldsymbol{\chi}}^*_{i,j} = 
    \begin{cases}
    0, & \frac{dP_{i,j}}{dQ_{i,j}} > \frac{w_{i,j}^-}{w_{i,j}^+}\gamma \\
    0, & \frac{dP_{i,j}}{dQ_{i,j}} =  \frac{w_{i,j}^-}{w_{i,j}^+}\gamma \text{ and } U \leq \eta \\
    1, & \frac{dP_{i,j}}{dQ_{i,j}} <  \frac{w_{i,j}^-}{w_{i,j}^+}\gamma \\
    1, & \frac{dP_{i,j}}{dQ_{i,j}} =  \frac{w_{i,j}^-}{w_{i,j}^+}\gamma \text{ and } U > \eta
    \end{cases}
\end{equation}
where $\gamma$ and $\eta\in[0,1]$ are chosen so that $\epsilon^+ = \epsilon$.
\end{theorem}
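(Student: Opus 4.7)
The plan is to adapt the classical Neyman--Pearson lemma to the weighted, aggregated error rates in \eqref{equivalenterrors}. I would treat each per-edge decision $\hat{\boldsymbol{\chi}}_{i,j}$ as a (possibly randomized) $\{0,1\}$-valued test function and write
\begin{equation*}
P^+_{i,j}(\hat{\boldsymbol{\chi}}) = \int \hat{\boldsymbol{\chi}}_{i,j}\, dP_{i,j}, \qquad Q^-_{i,j}(\hat{\boldsymbol{\chi}}) = \int (1-\hat{\boldsymbol{\chi}}_{i,j})\, dQ_{i,j},
\end{equation*}
so that both $\epsilon^+(\hat{\boldsymbol{\chi}})$ and $\epsilon^-(\hat{\boldsymbol{\chi}})$ are affine in the collection of test functions. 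The proof splits naturally into two steps: (i) feasibility, showing that $\gamma$ and $\eta$ can be chosen so that $\epsilon^+(\hat{\boldsymbol{\chi}}^*) = \epsilon$; and (ii) optimality, via a pointwise Lagrangian-type inequality comparing $\hat{\boldsymbol{\chi}}^*$ to any competing feasible $\hat{\boldsymbol{\chi}}'$.

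For step (ii), the key identity, obtained by direct expansion using the two integral representations above, is
\begin{equation*}
\gamma\bigl[\epsilon^-(\hat{\boldsymbol{\chi}}') - \epsilon^-(\hat{\boldsymbol{\chi}}^*)\bigr] + \bigl[\epsilon^+(\hat{\boldsymbol{\chi}}') - \epsilon^+(\hat{\boldsymbol{\chi}}^*)\bigr] = \sum_{i,j}\int (\hat{\boldsymbol{\chi}}^*_{i,j}-\hat{\boldsymbol{\chi}}'_{i,j})\bigl[\gamma w^-_{i,j}\, dQ_{i,j} - w^+_{i,j}\, dP_{i,j}\bigr].
\end{equation*}
Writing each signed measure against a common dominating measure, the bracket is strictly positive exactly on the set $\{dP_{i,j}/dQ_{i,j} < (w^-_{i,j}/w^+_{i,j})\gamma\}$, where by construction $\hat{\boldsymbol{\chi}}^*_{i,j} = 1$ and hence $\hat{\boldsymbol{\chi}}^*_{i,j}-\hat{\boldsymbol{\chi}}'_{i,j} \geq 0$. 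Symmetrically, the bracket is negative exactly where $\hat{\boldsymbol{\chi}}^*_{i,j} = 0$ and so $\hat{\boldsymbol{\chi}}^*_{i,j}-\hat{\boldsymbol{\chi}}'_{i,j} \leq 0$; on the tie set the bracket vanishes regardless of the randomization. Each integrand is therefore pointwise nonnegative, making the right-hand side nonnegative. Feasibility of $\hat{\boldsymbol{\chi}}'$ gives $\epsilon^+(\hat{\boldsymbol{\chi}}') - \epsilon^+(\hat{\boldsymbol{\chi}}^*) \leq 0$, and since $\gamma > 0$ for any $\epsilon \in (0,1)$, this yields $\epsilon^-(\hat{\boldsymbol{\chi}}') \geq \epsilon^-(\hat{\boldsymbol{\chi}}^*)$.

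For step (i), I would argue that the map $\gamma \mapsto \epsilon^+(\hat{\boldsymbol{\chi}}^*)$ is non-decreasing and right-continuous, sweeping from $0$ at $\gamma = 0$ (where $\hat{\boldsymbol{\chi}}^* \equiv 0$) to $1$ as $\gamma \to \infty$ (where $\hat{\boldsymbol{\chi}}^* \equiv 1$). Any jump at the target level $\epsilon$ is due to atoms of the likelihood ratios at the threshold value, and the scalar $\eta \in [0,1]$ interpolates continuously through such a jump, so $\epsilon^+(\hat{\boldsymbol{\chi}}^*) = \epsilon$ is attainable by a standard generalized-inverse argument.

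The main obstacle is reconciling the per-edge form of $\hat{\boldsymbol{\chi}}^*$ with the \emph{single} aggregated constraint. Unlike a collection of independent per-edge Neyman--Pearson tests (which would admit edge-specific multipliers), our problem imposes only one constraint on $\epsilon^+$, which is precisely why Lagrangian duality forces a single scalar $\gamma$; the edge-specific effective thresholds $(w^-_{i,j}/w^+_{i,j})\gamma$ then emerge because the aggregated Lagrangian decomposes edge-by-edge once $\gamma$ is fixed. A related subtlety is that the same randomization variable $U$ is used across all edges at tie values; this does not affect the optimality argument (the bracket vanishes on ties) but must be handled carefully in the continuity/monotonicity argument for feasibility.
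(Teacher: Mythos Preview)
Your proposal is correct and takes essentially the same approach as the paper: both adapt the classical Neyman--Pearson lemma to the weighted aggregated rates, with the paper using the Lagrangian formulation (minimize $\epsilon^- + \lambda_0 \epsilon^+$, then fix the threshold by monotonicity) while you use the equivalent direct pointwise-inequality version. Your existence argument via monotonicity and randomization on atoms also matches the paper's construction of $\gamma$ and $\eta$.
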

\begin{proof}
The proof is given in Section Appendix \ref{proof::optimal}
\end{proof}
\begin{figure}
    \centering
    \includegraphics[scale=.35]{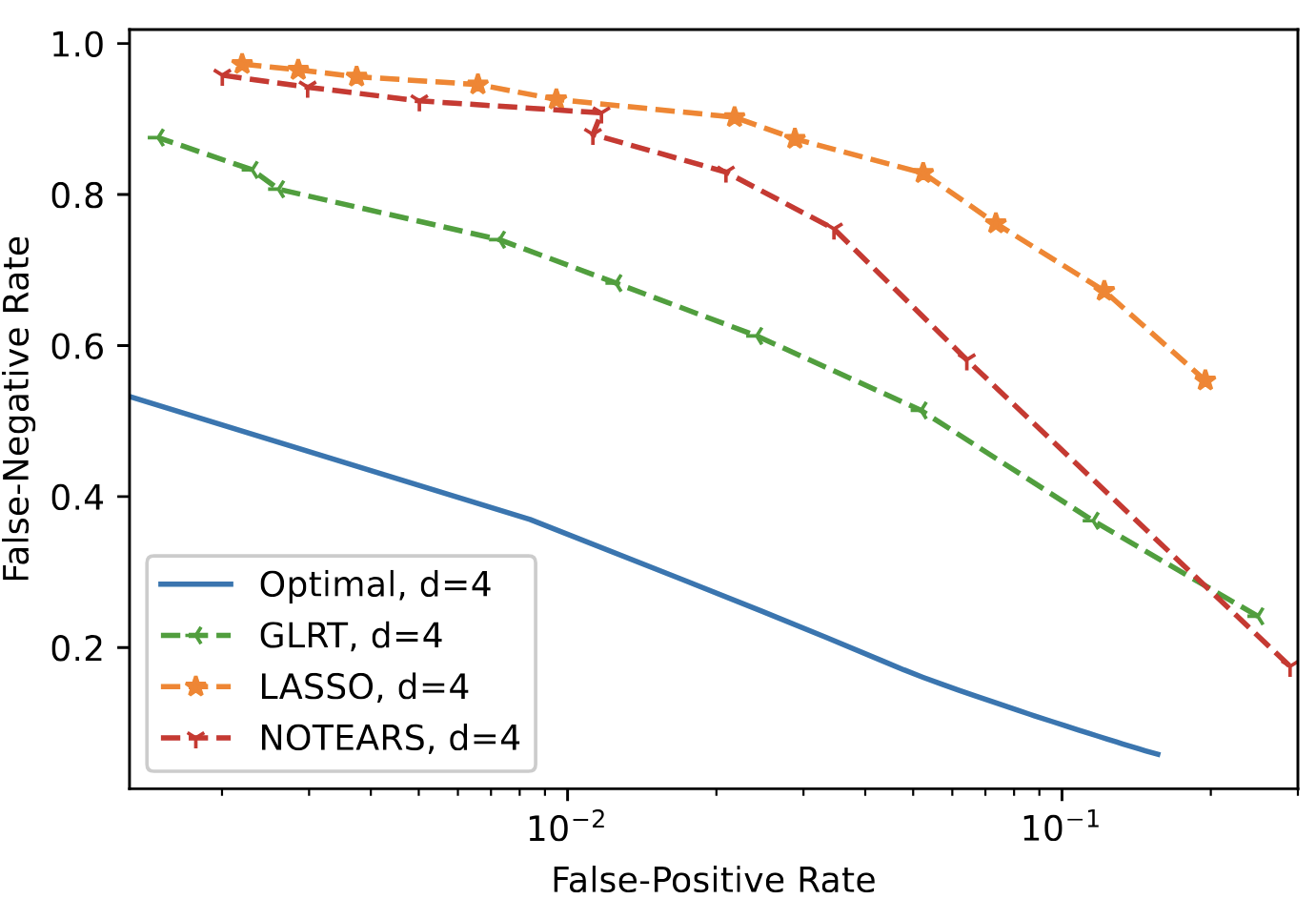}
    \caption{{Performance comparison showing the large gap between the optimal detector and several state-of-the-art methods on $n=10$ observations with $\sigma^2 = 1$.}}
    \label{fig:optimal}
\end{figure}
Similar to the optimal detector in Neyman-Pearson hypothesis testing, the random variable $U$ controls our randomization to achieve the false negative rate exactly %\textcolor{blue}
{\footnote{If the priors have different support for each hypothesis or the observations are discrete valued, randomization may be necessary per classical Neyman-Pearson hypothesis testing, \emph{e.g.}
\cite{poor2013introduction,van2004detection}.}.} The proof of Theorem \ref{optimalrule} is similar to that of the classical Neyman-Pearson lemma (Proposition II.D.1 in \cite{poor2013introduction}) and is given in Appendix \ref{proof::optimal}. {In Figure \ref{fig:optimal}, we compare the performance of the optimal detector with various state-of-the-art methods on $n=10$ observations with $\sigma^2 =1$. A description of each method is provide in Section \ref{Sec::prioralgorithms}. For the graph prior $\pi_{\boldsymbol{A}}$, let $\mathcal{D}$ denote the set of all matrices with entries equal to either 0 or 1 corresponding to a directed acyclic graph with $d$ vertices. For Figure \ref{fig:optimal}, we uniformly, at random, select a matrix from $\mathcal{D}$. The uniform graph prior and optimal detector can be computed for small graph sizes, but become expensive to compute for larger graphs. Thus, in the sequel, Section \ref{Sec::prioralgorithms},  we provide an alternative algorithm with finite sample performance guarantees whose complexity has a more favorable scaling with graph size.  In addition,  a lower complexity graph prior as an approximation to the uniform prior is also considered.
 Perhaps surprisingly, Figure \ref{fig:optimal} shows that the optimal detector strongly outperforms other methods, especially in the low false-positive regime. This gap suggests the need to develop improved finite-sample methods for low false-positive tolerance levels.
\par An interesting observation is that the threshold $\gamma$ and the randomization $\eta$ \textit{do not depend on the edge being considered}. The effect of the edge in question is captured by the edge weights in the factor $\frac{w_{i,j}^-}{w_{i,j}^+}$ (which is completely determined by the prior $\pi_{\boldsymbol{A}}$). This feature simplifies design and analysis.
 For example, the next proposition follows relatively easily due to the form of $\hat{\boldsymbol{\chi}}^*$.
\begin{proposition}\label{upperbound}
Assume $\boldsymbol{A}\sim \pi_{\boldsymbol{A}}$. For the detector $\hat{\boldsymbol{\chi}}^*$ given in Proposition \ref{optimalrule}, we have, for any $\lambda\in[0,1]$
\begin{equation}
    \epsilon^- \leq \sum_{i,j} (w^-_{i,j})^{1-\lambda} (w^+_{i,j})^{\lambda} \frac{1}{\gamma^{\lambda}} e^{-(1-\lambda)D_{\lambda}(P_{i,j}||Q_{i,j})},
\end{equation}
where $D_{\lambda}(P_{i,j}||Q_{i,j})$ is the  the R\'enyi divergence of order $\lambda$ between the series of distributions $P_{i,j}$ and $Q_{i,j}$ which is defined in Appendix \ref{AppenDefinitions}. 
\end{proposition}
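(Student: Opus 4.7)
The plan is to bound each per-edge false-negative probability $Q^{-}_{i,j}$ by a Chernoff/Markov-type moment inequality that converts the indicator of a likelihood-ratio threshold crossing into a fractional power of the likelihood ratio, and then aggregate across edges using the identity $\epsilon^{-} = \sum_{i,j} w^{-}_{i,j} Q^{-}_{i,j}$ recorded in \eqref{equivalenterrors}. The R\'enyi divergence will appear naturally once the resulting $Q_{i,j}$-expectation of the fractional moment of $\frac{dP_{i,j}}{dQ_{i,j}}$ is recognized as $\int (dP_{i,j})^{\lambda}(dQ_{i,j})^{1-\lambda}$.

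Concretely, let $\tau_{i,j} = \frac{w^{-}_{i,j}}{w^{+}_{i,j}}\gamma$. From Theorem \ref{optimalrule}, $\hat{\boldsymbol{\chi}}^{*}_{i,j} = 0$ exactly when $\frac{dP_{i,j}}{dQ_{i,j}} > \tau_{i,j}$, together with the randomization event $\{U \leq \eta\}$ on the equality set. Since $\eta \in [0,1]$, the randomized rule declares $0$ with $Q_{i,j}$-probability no larger than the deterministic rule "declare $0$ whenever $\frac{dP_{i,j}}{dQ_{i,j}} \geq \tau_{i,j}$", yielding
\[
Q^{-}_{i,j} \;\leq\; Q_{i,j}\!\left(\tfrac{dP_{i,j}}{dQ_{i,j}} \geq \tau_{i,j}\right).
\]
I would then invoke the pointwise inequality $\mathbbm{1}\{L \geq \tau\} \leq (L/\tau)^{\lambda}$, valid for any $\lambda \in [0,1]$ and $L \geq 0$, with $L = \frac{dP_{i,j}}{dQ_{i,j}}$, and integrate against $Q_{i,j}$:
\[
Q^{-}_{i,j} \;\leq\; \tau_{i,j}^{-\lambda} \int \!\left(\tfrac{dP_{i,j}}{dQ_{i,j}}\right)^{\!\lambda}\! dQ_{i,j} \;=\; \tau_{i,j}^{-\lambda} \int (dP_{i,j})^{\lambda} (dQ_{i,j})^{1-\lambda}.
\]
By the definition of R\'enyi divergence in Appendix \ref{AppenDefinitions}, the remaining integral equals $e^{-(1-\lambda) D_{\lambda}(P_{i,j}\|Q_{i,j})}$. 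Substituting $\tau_{i,j} = \frac{w^{-}_{i,j}}{w^{+}_{i,j}}\gamma$, multiplying through by $w^{-}_{i,j}$, and summing over $(i,j)$ produces exactly the claimed bound.

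The argument is essentially a Chernoff-to-R\'enyi conversion, so the main obstacle is bookkeeping rather than a serious analytic difficulty. The only genuinely subtle point is the randomization on the equality set in the optimal detector, which is disposed of cleanly by the monotonicity observation above. One must also check that the sign convention for $D_{\lambda}$ in Appendix \ref{AppenDefinitions} matches so that the exponent assembles as $-(1-\lambda) D_{\lambda}$; under the standard convention $D_{\lambda}(P\|Q) = \frac{1}{\lambda - 1}\log \int P^{\lambda} Q^{1-\lambda}$, the identification is immediate, and the freedom to optimize over $\lambda \in [0,1]$ passes directly to the final inequality.
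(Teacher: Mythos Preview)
Your proposal is correct and follows essentially the same route as the paper's proof: bound $Q^-_{i,j}$ by $Q_{i,j}\bigl(\tfrac{dP_{i,j}}{dQ_{i,j}} \geq \tfrac{w^-_{i,j}}{w^+_{i,j}}\gamma\bigr)$, apply the Chernoff-type inequality $\mathbbm{1}\{L\geq\tau\}\leq (L/\tau)^\lambda$, identify the resulting integral with $e^{-(1-\lambda)D_\lambda(P_{i,j}\|Q_{i,j})}$, multiply by $w^-_{i,j}$, and sum. If anything, your treatment of the randomization on the equality set is slightly more careful than the paper's, which simply starts from the one-sided inclusion without comment.
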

\begin{proof}
    The proof is given in Appendix \ref{ProofofUpper}
\end{proof}
The next theorem is a converse bound for any detector $\hat{\boldsymbol{\chi}}$.
\begin{proposition}\label{converse}
    Assume $\boldsymbol{A}\sim \pi_{\boldsymbol{A}}$. For any detector $\hat{\boldsymbol{\chi}}$ that satisfies $\epsilon^+ \leq \epsilon$, we have that for any $\lambda\in[0,1]$.
    \begin{equation}
        \begin{aligned}\label{conversebound}
        \epsilon^- &\geq \frac{1}{2} \sum_{i,j}w^-_{i,j}\exp\big\{-(1-\lambda)D_{\lambda}(P_{i,j}||Q_{i,j}) 
        \\
        -&\lambda D'_{\lambda}(P_{i,j}||Q_{i,j})-\lambda \sqrt{2D_{\lambda}''(P_{i,j}||Q_{i,j})}\big\} 
        \\ 
        & \qquad -\epsilon \max_{i,j}\bigg\{ \frac{w^-_{i,j}}{w^+_{i,j}}\exp\big\{-D'_{\lambda}(P_{i,j}||Q_{i,j}) 
        \\
        & \qquad +(1-2\lambda) \sqrt{2D_{\lambda}''(P_{i,j}||Q_{i,j})}\big\}\bigg\},
        \end{aligned}
    \end{equation}
    where $D_{\lambda}'(P_{i,j}||Q_{i,j})$ and $D_{\lambda}''(P_{i,j}||Q_{i,j})$ are defined in Appendix \ref{AppenDefinitions}.
\end{proposition}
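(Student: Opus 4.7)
The plan is to decompose $\epsilon^-$ edge-by-edge via \eqref{equivalenterrors}, derive a change-of-measure converse on each individual $Q^-_{i,j}$, and then aggregate using the constraint $\epsilon^+\leq\epsilon$ through a weighted-max inequality. For every edge $(i,j)$ and $\lambda\in[0,1]$, I introduce the $\lambda$-tilted distribution $R_{\lambda,i,j}$ with $dR_{\lambda,i,j}\propto (dP_{i,j})^{\lambda}(dQ_{i,j})^{1-\lambda}$; its normalizing constant is $\exp\{-(1-\lambda)D_{\lambda}(P_{i,j}||Q_{i,j})\}$, and differentiating the log-normalizer in $\lambda$ identifies $D'_{\lambda}$ and $D''_{\lambda}$ of Appendix \ref{AppenDefinitions} with the mean and variance of the log-likelihood ratio $L_{i,j}=\log(dQ_{i,j}/dP_{i,j})$ under $R_{\lambda,i,j}$ (up to the sign conventions used there). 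Chebyshev's inequality therefore gives $R_{\lambda,i,j}(\mathcal{A}_{i,j})\geq 1/2$ for the symmetric interval $\mathcal{A}_{i,j}$ of half-width $\sqrt{2D''_{\lambda}}$ about that tilted mean.

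Next, I bound $Q^-_{i,j}\geq Q_{i,j}\{\hat{\boldsymbol{\chi}}_{i,j}=0,\mathcal{A}_{i,j}\}$ via two successive changes of measure. Moving from $Q_{i,j}$ to $R_{\lambda,i,j}$ on $\mathcal{A}_{i,j}$ pulls out $\inf_{\mathcal{A}_{i,j}}(dQ_{i,j}/dR_{\lambda,i,j})$, which is monotone in $L_{i,j}$, is attained at the lower endpoint, and equals $\exp\{-(1-\lambda)D_{\lambda}-\lambda D'_{\lambda}-\lambda\sqrt{2D''_{\lambda}}\}$. To lower bound $R_{\lambda,i,j}\{\hat{\boldsymbol{\chi}}_{i,j}=0,\mathcal{A}_{i,j}\}$, I subtract $R_{\lambda,i,j}\{\hat{\boldsymbol{\chi}}_{i,j}=1,\mathcal{A}_{i,j}\}$ from the Chebyshev bound and dominate this ``bad'' mass by changing measure again to $P_{i,j}$, so that it is at most $\sup_{\mathcal{A}_{i,j}}(dR_{\lambda,i,j}/dP_{i,j})\cdot P^+_{i,j}$; this supremum is attained at the upper endpoint and equals $\exp\{-D'_{\lambda}+(1-2\lambda)\sqrt{2D''_{\lambda}}\}$. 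Combining these steps yields the per-edge inequality $Q^-_{i,j}\geq \tfrac{1}{2}E_{i,j} - C_{i,j}P^+_{i,j}$, where $E_{i,j}$ and $C_{i,j}$ are precisely the two exponentials appearing in \eqref{conversebound}. Multiplying by $w^-_{i,j}$ and summing produces the main Chebyshev sum, and the residual is handled via $\sum_{i,j}w^-_{i,j}C_{i,j}P^+_{i,j}=\sum_{i,j}(w^+_{i,j}P^+_{i,j})\cdot\tfrac{w^-_{i,j}C_{i,j}}{w^+_{i,j}}\leq \max_{i,j}\tfrac{w^-_{i,j}C_{i,j}}{w^+_{i,j}}\cdot\epsilon^+\leq \epsilon\,\max_{i,j}\tfrac{w^-_{i,j}C_{i,j}}{w^+_{i,j}}$, reproducing the subtracted max-term in \eqref{conversebound}.

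The main obstacle is the bookkeeping of the two asymmetric $\sqrt{2D''_{\lambda}}$ exponents: although the Chebyshev window is symmetric about the tilted mean, each Radon--Nikodym derivative is monotone in $L_{i,j}$, so the two changes of measure are controlled by opposite endpoints of $\mathcal{A}_{i,j}$, and this is what produces the $-\lambda\sqrt{2D''_{\lambda}}$ versus $+(1-2\lambda)\sqrt{2D''_{\lambda}}$ asymmetry in \eqref{conversebound}. Verifying that the tilted mean and variance of $L_{i,j}$ actually align with the appendix's sign convention for $D'_{\lambda}$ and $D''_{\lambda}$ is the other delicate check; once both accounting pieces are pinned down, the bound assembles directly from the per-edge inequality and the weighted-max aggregation step.
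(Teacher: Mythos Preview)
Your proposal is correct and follows essentially the same route as the paper's proof: introduce the $\lambda$-tilted measure $F_\lambda$ (your $R_{\lambda,i,j}$), use Chebyshev on the tilted log-likelihood ratio to get a set of mass $\geq 1/2$, bound $Q^-_{i,j}$ from below and $P^+_{i,j}$'s contribution from above via the two endpoint evaluations of the Radon--Nikodym derivatives on that window, and then aggregate with the weighted-max trick to invoke $\epsilon^+\leq\epsilon$. One minor bookkeeping slip: what you call ``this supremum'' $\sup_{\mathcal{A}_{i,j}}(dR_{\lambda,i,j}/dP_{i,j})$ is actually $\exp\{(1-\lambda)D_\lambda-(1-\lambda)D'_\lambda+(1-\lambda)\sqrt{2D''_\lambda}\}$, not $C_{i,j}$ itself; the quantity $C_{i,j}=\exp\{-D'_\lambda+(1-2\lambda)\sqrt{2D''_\lambda}\}$ only emerges after you multiply that supremum by $E_{i,j}$ when distributing through $Q^-_{i,j}\geq E_{i,j}\big(\tfrac12-\sup(dR/dP)\,P^+_{i,j}\big)$, exactly as the paper does --- but your final per-edge inequality and aggregation are stated correctly.
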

\begin{proof}
    The proof is given in Appendix \ref{ProofofConverse}
\end{proof}
Propositions \ref{upperbound} and \ref{converse} show that the  R\'enyi divergence controls the false negative rate for a given tolerance $\epsilon$. There are some important notes about Theorem \ref{optimalrule} and Propositions \ref{upperbound} and \ref{converse}.
\begin{enumerate}
    \item The detector specified in Theorem 1 is optimal, but is prior, $\pi_{\boldsymbol{A}}$, dependent as the computations of  
    $P_{i,j}$ and $Q_{i,j}$ do depend on $\pi_{\boldsymbol{A}}$.
    \item Theorem 1 is general. The proof of Theorem 1 makes no assumptions on the specific model class, and hence holds for any general class of models, such as differentiable systems with additive noise, linear systems with non-Gaussian noise, \emph{etc.} This is \textit{not} to say that the detector given by Theorem 1 is necessarily \textit{easy to compute} for any system model, as some models are more computationally tractable (such as the linear Gaussian case) than others. The theorem does provide insight into how to control error rates for the general causal discovery problem (and hence any general class of model).
    %Theorem \ref{optimalrule} and propositions \ref{upperbound} and \ref{converse} hold for \textit{any system model}, not just linear Gaussian systems described by \eqref{eq:linmodel}.
    \item It is shown in \cite{ISITPaper} that the proposed bounds are tighter for small $n$ and small false positive tolerance than previously considered bounds for the same metrics, such as those presented in \cite{HayekLower}.
    \item  As expected, the R\'enyi divergence provides tighter bounds in the finite sample regime than classical Chernoff bounds 
    %for classical Neyman-Pearson hypothesis testing and 
    and further provides an explicit dependence on $\epsilon$. As such, for appropriately selected $\lambda$, the bound given in \eqref{conversebound} reduces to an expression akin to classical bounds given in \cite{SGB}. 
\end{enumerate}

Unfortunately, although the form of the optimal detector $\hat{\boldsymbol{\chi}}^*$ is simple to express, in practice, it is often computationally intractable, even for small graphs and relatively simple priors.
\par
To see this, consider a system with two vertices, see Figure~\ref{fig:graphs}. We receive observation $\{X_k\}_1^n$ with $X_k = [X_{k}(1), X_{k}(2)]^\top$, $k=1,2,...,n$. As stated before, we assume an LSEM,
\begin{equation}\label{TwoNodeSystem}
    \begin{bmatrix}
    X_{k}(1) \\
    X_{k}(2)
    \end{bmatrix}
    =
    \boldsymbol{A}\begin{bmatrix}
    X_{k}(1) \\
    X_{k}(2)
    \end{bmatrix}
    +
    \begin{bmatrix}
    W_{k}(1) \\
    W_{k}(2)
    \end{bmatrix},
\end{equation}
where the $[W_{k}(1), W_{k}(2)]^\top$ vectors are \textit{i.i.d.} Gaussian vectors with zero mean and covariance matrix $\sigma^2 I$. Since we restrict ourselves to directed acyclic graphs, the adjacency matrix $\boldsymbol{A}$ can only take one of three possible forms, which we denote as follows,
\begin{equation}
    \boldsymbol{A}_0 = 
    \begin{bmatrix}
        0 & 0 \\
        0 & 0
    \end{bmatrix},
    \qquad 
    \boldsymbol{A}_1(a) = 
    \begin{bmatrix}
        0 & 0 \\
        a & 0
    \end{bmatrix},
    \qquad
    \boldsymbol{A}_2(a) = 
    \begin{bmatrix}
        0 & a \\
        0 & 0
    \end{bmatrix},
\end{equation}
where $a\in\mathbb{R}\setminus\{0\}$. Assume that the prior $\pi_{\boldsymbol{A}}$ selects from the structures of $\boldsymbol{A}_0$, $\boldsymbol{A}_1(a)$, and $\boldsymbol{A}_2(a)$ uniformly at random, and that $a=1+\zeta$ where $\zeta$ is a standard exponential random variable. In order to implement $\hat{\boldsymbol{\chi}}^*$ we must first compute the conditional distributions $P_{1,2}$ and $Q_{1,2}$. To compute $P_{1,2}$, observe that if $\boldsymbol{\chi}_{1,2} = 0$, then $\boldsymbol{A}_0$ must be the true graph structure, and so $X_{k}(1)$ and $X_{k}(2)$ are simply \textit{i.i.d} Gaussian random variables with variance $\sigma^2$. To compute $Q_{1,2}$, notice that if  $\boldsymbol{\chi}_{1,2} = 1$, the true graph may correspond to either $\boldsymbol{A}_1$ or $\boldsymbol{A}_2$. In either case, $X_{1,k}$ and $X_{2,k}$ are zero-mean jointly Gaussian random variables with covariance matrix $\sigma^2 (I-\boldsymbol{A}_1(a))^{-1}(I-\boldsymbol{A}_1(a))^{-\top}$ under $\boldsymbol{A}_1(a)$, and $\sigma^2(I-\boldsymbol{A}_2(a))^{-1}(I-\boldsymbol{A}_2(a))^{-\top}$ under $\boldsymbol{A}_2(a)$. Then, we have that
\begin{equation}
   \begin{aligned}\label{eq:complicated_marginal}
        &Q_{1,2}(X_k) =
        \\
        &\frac{1}{2}\int_1^\infty \sum_{i=1}^2\frac{e^{-\frac{1}{2\sigma^2}X_k^\top (I-\boldsymbol{A}_i(a))^{\top}(I-\boldsymbol{A}_i(a))X_k}}{2\pi |\sigma^2 (I-\boldsymbol{A}_i(a))^{-1}(I-\boldsymbol{A}_i(a))^{-\top}|} e^{-(a-1)}da.
   \end{aligned}
\end{equation}
Moreover, observe that this is only for one observation, and that the joint distribution becomes more complex since the observations are \textit{not independent a priori} and are only independent conditioned on $\boldsymbol{A}$. That is, for the joint distribution, the terms in the sum of \eqref{eq:complicated_marginal} become products of Gaussian distributions. From this example, it is not difficult to see how computing the distributions $P_{i,j}$ and $Q_{i,j}$ can become computationally infeasible as the number of vertices increases, or as the graph prior $\pi_{\boldsymbol{A}}$ becomes more complex.

\section{$\epsilon$-CUT}\label{Sec::CUT}
\begin{algorithm*}[t]
\caption{$\epsilon$-CUT}\label{algo}
Specify user false alarm tolerance $0<\epsilon\leq 1$ and the variance of the noise terms $\sigma^2$. Let $2^{\mathcal{V}}$ denote the power set of $\mathcal{V}$ (the vertex set). Let $\mathcal{B}$ denote the set of all unique pairs of edges, i.e., the set of all pairs $(i,j)$, such that $i\neq j$, $i,j\in\mathcal{V}$. The cumulative distribution function (cdf) of a chi-squared distribution with $l$ degrees of freedom is denoted by $F_l(x)$.

For each $(i,j)\in\mathcal{B}$:
    \begin{enumerate}
        \item For each $\hat{\mathcal{Z}}(i)\in2^{\mathcal{V}}$ and $\hat{\mathcal{Z}}(j)\in2^{\mathcal{V}}$ with $j\notin \hat{\mathcal{Z}}(i)$ and $i\notin \hat{\mathcal{Z}}(j)$ (i.e., the potential parent sets of nodes $i$ and $j$):
        \begin{enumerate}
            \item Perform linear least squares regression for nodes $i$ and $j$ on $\hat{\mathcal{Z}}(i)$ and $\hat{\mathcal{Z}}(j)$, respectively, obtaining the vectors of coefficients $\hat{\alpha}$ and $\hat{\beta}$, i.e.,
            \begin{equation*}
                \hat{\alpha} = (\hat{\boldsymbol{Z}}(i)\hat{\boldsymbol{Z}}(i)^\top)^{-1}\hat{\boldsymbol{Z}}(i)^\top \boldsymbol{X}(i)
            \end{equation*}
            where $\hat{\boldsymbol{Z}}(i) = [\hat{Z}_1(i),\hat{Z}_2(i),...,\hat{Z}_n(i)]^\top$, $\hat{Z}_k(i) = [X_k(i_1), X_k(i_2),...,X_k(i_{|\hat{\mathcal{Z}}(i)|})]^\top$ for $i_l \in \hat{\mathcal{Z}}(i)$, $l=1,2,...,|\hat{\mathcal{Z}}(i)|$, and $\boldsymbol{X}(i) = [X_1(i),X_2(i),...,X_n(i)]^\top$. Similarly for $\hat{\beta}$.
            \item Compute
            \begin{align*}
                \hat{\sigma}^2_i = \sum_{k=1}^n \big(X_k(i) - \hat{\alpha}^\top \hat{Z}_k(i)\big)^2, \quad
                \hat{\sigma}^2_j = \sum_{k=1}^n \big(X_k(j) - \hat{\beta}^\top \hat{Z}_k(j)\big)^2,
            \end{align*}
            i.e., the residual sum of squares for vertices $i$ and $j$.
            \item Define $ \tau'_{\hat{\mathcal{Z}}(i),\hat{\mathcal{Z}}(j)} =  (\tau_{\hat{\mathcal{Z}}(i),\hat{\mathcal{Z}}(j)}- |q-p|\sigma^2)/(2\sigma^2)$ and compute $\tau_{\hat{\mathcal{Z}}(i),\hat{\mathcal{Z}}(j)}$ such that
            \begin{align*}
                &2-F_{n-p}(\tau'_{\hat{\mathcal{Z}}(i),\hat{\mathcal{Z}}(j)} + n- p) + F_{n-p}(-\tau'_{\hat{\mathcal{Z}}(i),\hat{\mathcal{Z}}(j)} + n- p)
                \\
                &\qquad \qquad \qquad -F_{n-q}(\tau'_{\hat{\mathcal{Z}}(i),\hat{\mathcal{Z}}(j)} + n- q) + F_{n-q}(-\tau'_{\hat{\mathcal{Z}}(i),\hat{\mathcal{Z}}(j)} + n- q)=\epsilon,\label{Fbound}
            \end{align*}
            where $p = |\hat{\mathcal{Z}}(i)|$ and $q=|\hat{\mathcal{Z}}(j)|$.
            \item If $|\hat{\sigma}^2_i - \hat{\sigma}^2_j| \leq  \tau_{\hat{\mathcal{Z}}(i),\hat{\mathcal{Z}}(j)}$, declare $\hat{\chi}_{i,j}=0$.
        \end{enumerate}
        \item If all potential parent sets $\hat{\mathcal{Z}}(i)\in2^{\mathcal{V}}$ and $\hat{\mathcal{Z}}(j)\in2^{\mathcal{V}}$ have been tested without declaring $\hat{\chi}_{i,j}=0$, declare $\hat{\chi}_{i,j}=1$. 
    \end{enumerate}
\end{algorithm*}
Because of the shortcomings mentioned above, we derive an alternative algorithm, the  \textit{ $\epsilon$-rate \underline{C}ausal discovery with \underline{U}nequal edge error \underline{T}olerance} ($\epsilon-CUT$) algorithm.Recall that detecting the absence or presence of an edge is equivalent to a binary hypothesis test (see \eqref{HypoTest0} and \eqref{HypoTest1}) for which we provide an alternative algorithm. This algorithm relies on the fact that children have higher variances than their parents, which we highlight through an illustrative example below. In addition, we provide time complexity analysis for $\epsilon-CUT$. %Hence, deriving an alternate detector that solves the hypothesis-testing problem yields a different algorithm. 
\par
%We first give the intuition of our detector, which is best explained through a simple example. 
Via an example, we provide the intuition for our new detector.
Assume we have the same two-vertex given by equation \eqref{TwoNodeSystem}, with the understanding that $a$ may be any non-zero real number (which may be either random or deterministic). If $\boldsymbol{A}=\boldsymbol{A}_0$, we have that for all $k$,
\begin{align}
    \mathbb{E}[X_k(1)^2] &= \mathbb{E}[W_k(1)^2] = \sigma^2, \\
    \mathbb{E}[X_k(2)^2] &= \mathbb{E}[W_k(2)^2] = \sigma^2.
\end{align}
Alternatively, if $\boldsymbol{A}=\boldsymbol{A}_1$, we still have $\mathbb{E}[X_k(1)^2] = \sigma^2$, but
\begin{equation}
    \mathbb{E}[X_k(2)^2] = \mathbb{E}[\big(a X_k(1) + W_k(2)\big)^2] = (a^2 + 1)\sigma^2.
\end{equation}
Similarly, if $\boldsymbol{A}=\boldsymbol{A}_2$, then we have
\begin{align}
    \mathbb{E}[X_k(1)^2] = (a^2 + 1)\sigma^2, \qquad 
    \mathbb{E}[X_k(2)^2] =  \sigma^2.
\end{align}
Hence, we can study the following equivalent hypothesis testing problem,
\begin{align}
    H_0: \mathbb{E}[X_k(1)^2] = \mathbb{E}[X_k(2)^2], \\
    H_1: \mathbb{E}[X_k(1)^2] \neq \mathbb{E}[X_k(2)^2],
\end{align}
since, in the linear Gaussian setting, any solution to the above problem is also a solution to \eqref{HypoTest0}-\eqref{HypoTest1}.
\par
Our hypothesis test is given as follows. If $H_0$ is true, then the empirical estimates of the variances should be roughly equal with high probability. That is, for a properly specified threshold $\tau$, the event $|\hat{\sigma}^2_1 - \hat{\sigma}^2_2| \leq  \tau$,
where
\begin{equation}
    \hat{\sigma}^2_1 = \sum_{k=1}^n X_k(1)^2, \quad \hat{\sigma}^2_2 = \sum_{k=1}^n X_k(2)^2,
\end{equation}
should occur with high probability. To extend the intuition to the case of more than two vertices, notice that if we condition on the parents of vertex $i$, $\hat{\sigma}^2_i$ is now given by
\begin{align}
    \hat{\sigma}^2_i &= \sum_{k=1}^n \big(X_k(i) - \alpha_i^\top Z_k(i)\big)^2, \label{estimate}
\end{align}
where $\alpha_i$ is the vector of non-zero edge weights in the $i$th row of $\boldsymbol{A}$.
{Recall that tests based on the residual sum of squares has been previously considered\cite{residuals_latent,Zhang_Zhou_Guan_residuals}. However, these works test {\em independence} between residuals and only provide asymptotic guarantees. In contrast, our algorithm compares the difference between residuals to detect energy differences. Moreover, we can select the threshold $\tau$ to satisfy the false positive constraint in a \textit{finite-sample setting}}. The following lemma, whose proof is in Appendix \ref{ProofChiSquared}, enables the setting of the needed threshold to achieve our finite-sample constraints.
\begin{lemma}\label{chisquaredlemma}
For any $n$ and $i\in\mathcal{V}$ let $\mathcal{Z}(i)$ and $Z_k(i)$ be defined as in the top of Section \ref{Sec::problemformulation}. Define
\begin{equation}
    \hat{\sigma}^{*2}_i = \sum_{k=1}^n \big(X_k(i) - \hat{\alpha}_i^\top Z_k(i)\big)^2,
\end{equation}
where $\hat{\alpha}_i$ is the vector of coefficients resulting from performing least squares for vertex $i$ on $\{Z_k(i)\}_{k=1}^n$. Then, conditioned on $\{Z_k(i)\}_{k=1}^n$, $\hat{\sigma}^{*2}_i/\sigma^2$ follows a chi-squared distribution with $n-p$ degrees of freedom, where $p = |\mathcal{Z}(i)|$.
\end{lemma}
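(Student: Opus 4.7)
The plan is to reduce Lemma~\ref{chisquaredlemma} to the classical quadratic-form result for ordinary least squares residuals. First I would use \eqref{eq:linmodel} restricted to coordinate $i$ to write $X_k(i) = \alpha_i^\top Z_k(i) + W_k(i)$, where $\alpha_i \in \mathbb{R}^p$ collects the true nonzero entries in row $i$ of $\boldsymbol{A}$ (i.e., the true regression coefficients of $X_k(i)$ on its parents). Stacking over $k = 1, \dots, n$ gives $\boldsymbol{X}(i) = \boldsymbol{Z}(i)\alpha_i + \boldsymbol{W}(i)$ with $\boldsymbol{W}(i) := [W_1(i), \dots, W_n(i)]^\top$. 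Setting $M := I_n - \boldsymbol{Z}(i)(\boldsymbol{Z}(i)^\top \boldsymbol{Z}(i))^{-1}\boldsymbol{Z}(i)^\top$, the OLS residual vector is $M\boldsymbol{X}(i) = M\boldsymbol{W}(i)$ because $M\boldsymbol{Z}(i) = 0$, and since $M$ is symmetric and idempotent, $\hat{\sigma}^{*2}_i = \boldsymbol{W}(i)^\top M \boldsymbol{W}(i)$, a quadratic form in $\boldsymbol{W}(i)$ whose kernel depends only on the conditioning data.

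The key conceptual step---and the only place the DAG structure enters---is the independence claim $\boldsymbol{W}(i) \perp \{Z_k(i)\}_{k=1}^n$. For this I would use that each parent coordinate $X_k(i_\ell)$ is a measurable function of exogenous noise terms $W_k(j)$ with $j$ strictly upstream of $i$ in $\mathcal{G}$, and in particular does not involve $W_k(i)$. Combined with the across-coordinate and across-sample independence of $\{W_k\}$ assumed in the LSEM, this yields mutual independence between $\boldsymbol{W}(i)$ and $\boldsymbol{Z}(i)$. Consequently the conditional distribution of $\boldsymbol{W}(i)$ given $\{Z_k(i)\}_{k=1}^n$ remains $\mathcal{N}(0, \sigma^2 I_n)$.

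Under this conditioning, $M$ is a deterministic symmetric idempotent matrix of rank $n-p$ (almost surely, since $\boldsymbol{Z}(i)$ has full column rank a.s.\ under continuous Gaussian noise propagated through $\mathcal{G}$ for $n \geq p$). Writing $M = UU^\top$ with $U \in \mathbb{R}^{n \times (n-p)}$ having orthonormal columns yields $\hat{\sigma}^{*2}_i/\sigma^2 = \|U^\top \boldsymbol{W}(i)/\sigma\|^2$, and since orthogonal transformation of standard Gaussians is still standard Gaussian, $U^\top \boldsymbol{W}(i)/\sigma \sim \mathcal{N}(0, I_{n-p})$, so the squared norm is $\chi^2_{n-p}$ as claimed. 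The main obstacle is the independence argument; the remainder is textbook linear algebra.
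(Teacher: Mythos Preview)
Your proposal is correct and follows essentially the same route as the paper: define the hat matrix complement $M=I-\boldsymbol{Z}(i)(\boldsymbol{Z}(i)^\top\boldsymbol{Z}(i))^{-1}\boldsymbol{Z}(i)^\top$, use $M\boldsymbol{Z}(i)=0$ to reduce the residual sum of squares to $\boldsymbol{W}(i)^\top M\boldsymbol{W}(i)$, and then invoke idempotence plus an orthogonal change of basis to obtain a $\chi^2_{n-p}$ quadratic form. The only difference is cosmetic---the paper diagonalizes $M$ via a full unitary matrix rather than your thin factorization $M=UU^\top$---and in fact your explicit justification of the independence $\boldsymbol{W}(i)\perp\boldsymbol{Z}(i)$ from the DAG structure is a detail the paper sidesteps by simply restating the lemma with deterministic regressors.
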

The complete algorithm for $\epsilon-CUT$ is given in Algorithm \ref{algo}. With Lemma \ref{chisquaredlemma}, we can prove the following result.
\begin{theorem}\label{satisfyconstraint}
Assume we have a LSEM, and that $\boldsymbol{A}\sim \pi_{\boldsymbol{A}}$. Then, for any number of samples $n$ and any given false positive tolerance $\epsilon$, the false positive rate of $\epsilon$-CUT, denoted by $\epsilon^+_1$, satisfies $\epsilon^+_1 \leq \epsilon$.
\end{theorem}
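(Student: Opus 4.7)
My plan is to establish the per-edge bound $P^+_{i,j} \leq \epsilon$ for every pair $(i,j)$; since by Definition \ref{errorprobs} the aggregated rate $\epsilon^+_1 = \sum_{i,j} w^+_{i,j} P^+_{i,j}$ with $\sum_{i,j} w^+_{i,j} = 1$, the desired inequality $\epsilon^+_1 \leq \epsilon$ follows immediately by convex combination. Thus the work reduces to controlling each conditional false-alarm probability $P^+_{i,j} = \mathbb{P}(\hat{\chi}_{i,j} = 1 \mid \chi_{i,j} = 0)$.

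To obtain such a per-edge bound, I would exploit the structure of Algorithm \ref{algo}: the algorithm sets $\hat{\chi}_{i,j} = 1$ only when \emph{every} candidate parent-set pair $(\hat{\mathcal{Z}}(i), \hat{\mathcal{Z}}(j))$ in the power-set enumeration fails its threshold test. When $\chi_{i,j} = 0$, the true parent sets $\mathcal{Z}(i), \mathcal{Z}(j)$ automatically satisfy $j \notin \mathcal{Z}(i)$ and $i \notin \mathcal{Z}(j)$, so they are among the enumerated pairs. Consequently $P^+_{i,j}$ is upper bounded by the probability that the test run at the \emph{true} parent sets fires, i.e.\ that $|\hat{\sigma}^{*2}_i - \hat{\sigma}^{*2}_j| > \tau_{\mathcal{Z}(i),\mathcal{Z}(j)}$, where $\hat{\sigma}^{*2}_i, \hat{\sigma}^{*2}_j$ are the residual sums of squares from regressing on the true parents. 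By Lemma \ref{chisquaredlemma}, $\hat{\sigma}^{*2}_i / \sigma^2 \sim \chi^2_{n-p}$ and $\hat{\sigma}^{*2}_j / \sigma^2 \sim \chi^2_{n-q}$ marginally, with $p = |\mathcal{Z}(i)|$ and $q = |\mathcal{Z}(j)|$.

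The next step is to apply the centered triangle inequality
\[
|\hat{\sigma}^{*2}_i - \hat{\sigma}^{*2}_j| \leq |\hat{\sigma}^{*2}_i - (n-p)\sigma^2| + |\hat{\sigma}^{*2}_j - (n-q)\sigma^2| + |q-p|\sigma^2,
\]
from which the event $\{|\hat{\sigma}^{*2}_i - \hat{\sigma}^{*2}_j| > \tau\}$ is contained in the union of the two events where either centered deviation exceeds $(\tau - |q-p|\sigma^2)/(2\sigma^2) = \tau'$. A union bound using the two marginal chi-squared cdfs produces exactly $2 - F_{n-p}(\tau' + n-p) + F_{n-p}(-\tau' + n-p) - F_{n-q}(\tau' + n-q) + F_{n-q}(-\tau' + n-q)$, which by the construction of $\tau_{\mathcal{Z}(i),\mathcal{Z}(j)}$ in Step 1(c) equals precisely $\epsilon$. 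Combining the three steps yields $P^+_{i,j} \leq \epsilon$ for every edge, and aggregating against $w^+_{i,j}$ finishes the argument.

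The main subtlety I expect is conceptual rather than computational: the union bound intentionally uses only the marginal distributions and never the joint law of $(\hat{\sigma}^{*2}_i, \hat{\sigma}^{*2}_j)$. This matters because, even when $\chi_{i,j} = 0$, the two residuals can be statistically coupled through common ancestors of $i$ and $j$ in the DAG, and writing down their joint distribution would require pinning down the entire surrounding graph. The triangle-plus-union step sidesteps this dependence at the cost of some looseness, and this looseness is exactly what purchases the finite-sample, graph-agnostic guarantee. A minor preliminary check is that the defining equation for $\tau_{\hat{\mathcal{Z}}(i),\hat{\mathcal{Z}}(j)}$ admits a nonnegative solution $\tau'$ for every $\epsilon \in (0,1)$; this follows from monotonicity of the cdf expression in $\tau'$, decreasing from $2$ at $\tau'=0$ to $0$ as $\tau' \to \infty$.
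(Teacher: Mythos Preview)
Your proposal is correct and follows essentially the same route as the paper: reduce to the per-edge bound $P^+_{i,j}\le\epsilon$, bound the intersection over all candidate parent-set pairs by the single event at the true parents, apply the centered triangle inequality and the ``$a+b>c\Rightarrow a>c/2\text{ or }b>c/2$'' splitting, then a union bound together with Lemma~\ref{chisquaredlemma} to land on the threshold-defining expression. The only cosmetic difference is that the paper carries out the whole argument after first conditioning on $\boldsymbol{A}$ and on $\boldsymbol{X}^{\setminus i,j}$, which makes the invocation of Lemma~\ref{chisquaredlemma} literal (the lemma is stated conditionally on the parent measurements) and then integrates out; your use of ``marginal'' chi-squared laws is equivalent since the conditional law in Lemma~\ref{chisquaredlemma} does not depend on the conditioning values.
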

Some important notes regarding $\epsilon$-CUT and Theorem \ref{satisfyconstraint}.
\begin{enumerate}

    \item Theorem \ref{satisfyconstraint} is a finite-sample result. This differs from much of the current literature which focuses on asymptotic recoverability guarantees of causal discovery \cite{chickering2002learning,equivariance,StriederDrtonconfidence,causalmatrixcompletion,pearl2009causality}. Finite-sample results have appeared in the literature, with a notable result appearing in \cite{lasso} (Theorem 3). However, the result in \cite{lasso} deals with connected components instead of individual edges, which is our main objective.
    \item The exact algorithm for $\epsilon-CUT$, outlined in Algorithm \ref{algo}, does not depend on the choice of prior $\pi_{\boldsymbol{A}}$. Hence, $\epsilon-CUT$ may be used in either a Bayesian or non-Bayesian setting.
    \item The procedure outlined in Algorithm \ref{algo} requires finding a different threshold for each value of $p$ and $q$ given in Algorithm \ref{algo} c). While this is not computationally difficult it is possible to use only a single threshold in Algorithm \ref{algo} while maintaining the guarantee that the false positive rate is less than $\epsilon$. To see this, observe that for two chi-squared random variables with $p$ and $q$ degrees of freedom, respectively, where $p\leq q$ we have that for any $x\in\mathbb{R}$, $F_p(x) \geq F_q(x)$, where $F_p$ is the cdf of a chi-squared random variable with $p$ degrees of freedom. This tells us that the term in Algorithm \ref{algo} c) is upper-bounded by
    \begin{equation}
        \begin{aligned}
2\big[1-F_n (\tau'+n) + F_{n-d+2}(-\tau' +n -d + 2)\big].
        \end{aligned}
    \end{equation}
    where $\tau' = (\tau - (d-2)\sigma^2)/(2\sigma^2)$. Hence, using only a single threshold while maintaining the false positive constraint is possible.
    \item As mentioned before, different solutions to the hypothesis-testing problem between $H_0$ and $H_1$, lead to different causal discovery algorithms. Another way to circumvent the computational challenges associated with the optimal detector $\hat{\boldsymbol{\chi}}^*$ is to use a generalized likelihood ratio test (GLRT). This is done in \cite{StriederDrtonconfidence} to obtain confidence intervals on the causal effect between two vertices, rather than directly declaring the presence or absence on an edge. Unfortunately, the main results in \cite{StriederDrtonconfidence} are asymptotic. Hence, $\epsilon$-CUT circumvents the computational issues associated with computing $\hat{\boldsymbol{\chi}}^*$ while providing finite-sample results.
    \item Unlike several popular algorithms such as BIC \cite{equivariance}, LASSO \cite{lasso}, and NOTEARS \cite{NOTEARS}, $\epsilon$-CUT requires no hyper-parameter tuning. That is, in the algorithms mentioned, a sparsity constraint is added through a regularization term ($\ell_0$ in \cite{equivariance} and  $\ell_1$ in \cite{lasso,NOTEARS}). The constant $\lambda$ controls the sparsity of the resulting graph. Unfortunately, there is no current way to determine \textit{a priori} how the constant $\lambda$ affects the error rates, and if these rates satisfy the constraint in \eqref{NPCI}. Hence, one needs to experiment with different regularization constants. In contrast $\epsilon$-CUT, once $\epsilon$ is given, everything in $\epsilon$-CUT is completely specified.
    \end{enumerate}
    \vspace*{-0.1in}
\begin{proof}[Proof Sketch of Theorem \ref{satisfyconstraint}]
Since $\epsilon^+ = \sum_{i,j} w^+_{i,j}P_{i,j}^+$, it suffices to show that for all $i,j$, $P^+_{i,j} \leq \epsilon$. Note that
\vspace*{-0.1in}
\begin{equation}
    \begin{aligned}
P^+_{i,j} &= \int_{\boldsymbol{A}}\int_{\boldsymbol{X}^{\setminus i,j}}\mathbb{P}_{\boldsymbol{A}}(\hat{\boldsymbol{\chi}}_{i,j}=1|\boldsymbol{X}^{\setminus i,j}) \mathbb{P}_{\boldsymbol{A}}(\boldsymbol{X}^{\setminus i,j}) 
\\
&\qquad \qquad \qquad \qquad \frac{\mathbbm{1}\{\boldsymbol{A}_{i,j}=0 \cap \boldsymbol{A}_{j,i}=0\}\pi_{\boldsymbol{A}}}{\mathbb{P}(\boldsymbol{\chi}_{i,j}=0)},
    \end{aligned}
\end{equation}
where $\boldsymbol{X}^{\setminus i,j}$ denotes the set of all measurements except those from the $i$th and $j$th vertex. That is, $\boldsymbol{X}^{\setminus i,j} = \{X^{\setminus i, j}_k\}_{k=1}^n$ where $X^{\setminus i, j}_k = [X_k(1),...,X_k(i-1),X_k(i+1),...,X_k(j-1), X_k(j+1),...,X_k(d)]^\top$.
Then, it suffices to show that for any $\boldsymbol{A}$ and $\boldsymbol{X}^{\setminus i,j}$ we have $\mathbb{P}_{\boldsymbol{A}}(\hat{\boldsymbol{\chi}}_{i,j}=1|\boldsymbol{X}^{\setminus i,j}) \leq \epsilon$.
This is done by first noticing that for $\hat{\sigma}^2_i$ and $\hat{\sigma}^2_j$ as defined in Algorithm \ref{algo},
\begin{align}
    &\mathbb{P}_{\boldsymbol{A}}(\hat{\boldsymbol{\chi}}_{i,j}=1|\boldsymbol{X}^{\setminus i,j}) 
\end{align}
\begin{align}
    &= \mathbb{P}_{\boldsymbol{A}}\bigg(\bigcap_{\hat{\mathcal{Z}}(i),\hat{\mathcal{Z}}(j)}|\hat{\sigma}^2_i - \hat{\sigma}^2_j| >  \tau_{\hat{\mathcal{Z}}(i),\hat{\mathcal{Z}}(j)}|\boldsymbol{X}^{\setminus i,j}\bigg)\\
    &\overset{(a)}{\leq} \mathbb{P}_{\boldsymbol{A}}\big(|\hat{\sigma}^{*2}_i - \hat{\sigma}^{*2}_j| >  \tau_{\mathcal{Z}(i),\mathcal{Z}(j)}|\boldsymbol{X}^{\setminus i,j}\big)\label{goalprob},
\end{align}

where $(a)$ holds since the event $|\hat{\sigma}^2_i - \hat{\sigma}^2_j| >  \tau_{\hat{\mathcal{Z}}(i),\hat{\mathcal{Z}}(j)}$ must hold for \textit{all potential parent sets} $\hat{\mathcal{Z}}(i),\hat{\mathcal{Z}}(j)$, including the true parent sets $\mathcal{Z}(i),\mathcal{Z}(j)$, and so 
\begin{equation}
    \begin{aligned}
        \bigg\{\bigcap_{\hat{\mathcal{Z}}(i),\hat{\mathcal{Z}}(j)}|\hat{\sigma}^2_i - \hat{\sigma}^2_j| >  \tau_{\hat{\mathcal{Z}}(i),\hat{\mathcal{Z}}(j)}\bigg\} 
        \\
        \subset \{|\hat{\sigma}^{*2}_i - \hat{\sigma}^{*2}_j| >  \tau_{\mathcal{Z}(i),\mathcal{Z}(j)}\}.
    \end{aligned}
\end{equation}
Using Lemma \ref{chisquaredlemma} together with a series of inequalities and algebraic manipulations, \eqref{goalprob} is upper bounded by the expression given in Algorithm \ref{algo} c), which completes the proof. The full proof is given in Appendix \ref{proof::eCUT}. 
\end{proof}
Notice that in the proof of Theorem \ref{satisfyconstraint} it is shown that for any $\boldsymbol{A}$ (with both $\boldsymbol{A}_{i,j}=0 $ and $\boldsymbol{A}_{j,i}=0$ and $\boldsymbol{X}^{\setminus i,j}$ we have $\mathbb{P}_{\boldsymbol{A}}(\hat{\boldsymbol{\chi}}_{i,j}=1|\boldsymbol{X}^{\setminus i,j}) \leq \epsilon$. This then implies that
\begin{align}
    \mathbb{P}_{\boldsymbol{A}}(\hat{\boldsymbol{\chi}}_{i,j}=1) &=\int_{\boldsymbol{X}^{\setminus i,j}}\mathbb{P}_{\boldsymbol{A}}(\hat{\boldsymbol{\chi}}_{i,j}=1|\boldsymbol{X}^{\setminus i,j})\mathbb{P}_{\boldsymbol{A}}(\boldsymbol{X}^{\setminus i,j}) \\
    &\leq \epsilon \int_{\boldsymbol{X}^{\setminus i,j}}\mathbb{P}_{\boldsymbol{A}}(\boldsymbol{X}^{\setminus i,j}) = \epsilon.
\end{align}
Hence, if the true underlying graph does not have an edge between $i$ and $j$, the probability of a false positive does not exceed $\epsilon$. Observe that this is a non-Bayesian result, and so $\epsilon-CUT$ may be used in a Bayesian or non-Bayesian setting while preserving finite-sample results.

    \subsection{Complexity Analysis}
For each pair of vertices $i$ and $j$, we must conduct a threshold test between the residual sum of squares for potentially every pair of vertex subsets that do not contain $i$ or $j$. Since there are $4^{d-2}$ possible pairs of these subsets for each pair of vertices, we must compute at most $\frac{d(d-1)}{2}4^{d-2}$ linear regressions. The computational complexity for linear regression is $\mathcal{O}(d^2(n+d))$ (the details are in the supplemental file located at https://github.com/shaskajo/epsilonCUT). For each pair of subsets, we must also perform a threshold test on the difference between the residuals, which has complexity $\mathcal{O}(1)$. Hence, the overall complexity of $\epsilon$-CUT is $\mathcal{O}(d^3(d-1)4^{d-2}(n+d))$.

\section{Comparison Algorithms}\label{Sec::prioralgorithms}
In Section \ref{Sec::numerical} we consider numerical examples to compare $\epsilon$-CUT to other popular algorithms such as NOTEARS \cite{NOTEARS}, DAGMA \cite{DAGMA}, LASSO neighborhood selection \cite{lasso}, and the generalized likelihood ratio test \cite{StriederDrtonconfidence}. Hence, we briefly summarize all prior algorithms used in the sequel. We also briefly discuss the computational complexity of LASSO and the generalized likelihood ratio test. We focus on these algorithms since they possess some form of finite-sample guarantee of recovering the underlying graph. Moreover, continuous optimization formulations, such as NOTEARS and DAGMA, suffer from a highly non-convex optimization space, and hence cannot guarantee convergence to global optima for finite samples, as well as making computational complexity analysis rather difficult and beyond the scope of this paper. However, we provide runtime results in Section \ref{Sec::numerical} for all algorithms used. Because of this, we also briefly mentioned how each method is implemented.
\subsection{LASSO}
For a given observed node $X_{k}(i)$, let $Y^i_k=[X_{k}(1), ...,X_{k}(i-1), X_{k}(i+1), ... , X_{k}(d)]^\top$ for $k=1,2,...,n$. Then, assuming a linear model for the observations, $\hat{X}_{k}(i) = A_i^\top Y_{i,k}$, where $A_i$ is a vector of coefficients, LASSO \cite{lasso} minimizes the following  sparsity penalized loss over the coefficients of $A_i$,
    \begin{equation}
        \frac{1}{2n} \sum_{k=1}^n \|X_{k}(i) - A_i^\top Y^i_k \|_2^2 + \lambda \|A_{i} \|_1,
    \end{equation}
    for each node $i$, where $\lambda$ is a regularizer term.
    \par 
    The complexity of LASSO neighborhood selection for a single vertex is $\mathcal{O}(nd\min\{n,d\})$ \cite{lasso}. Hence, the total complexity for LASSO is $\mathcal{O}(nd^2\min\{n,d\})$. We implement LASSO using the Python implementation provided in the sklearn library.
\subsection{Generalized Likelihood Ratio Test (GLRT)}
We consider a generalized likelihood ratio test (GLRT) for our hypothesis testing problem. \cite{StriederDrtonconfidence}. In \cite{StriederDrtonconfidence}, the GLRT is used to construct confidence intervals on the causal effect rather than detect the presence or absence of an edge. We modify the algorithm  in \cite{StriederDrtonconfidence} to address our problem. If we let $\mathcal{M}$ denote the set of all DAGS and $\mathcal{M}_{i,j}^0$ the set of all DAGS with no edge between vertices $i$ and $j$. Then, for given DAG $\mathcal{G}$ with weighted adjacency matrix $\boldsymbol{A}$, the likelihood function is given as
    \begin{equation}\label{likelihood}
        \ell (\mathcal{G}) = -\frac{np}{2}\log(2\pi\sigma^2) - \frac{n}{2\sigma^2}\text{Trace}\big((I-\boldsymbol{A})^\top(I-\boldsymbol{A})\hat{\Sigma} \big),
    \end{equation}
    where $\hat{\Sigma}$ is the empirical covariance matrix. Then, define the test statistic
    \begin{equation}
        \Lambda_{i,j} = 2(\sup_{\mathcal{G}\in\mathcal{M}} \ell (\mathcal{G}) - \sup_{\mathcal{G}\in\mathcal{M}_{i,j}^0} \ell (\mathcal{G})),
    \end{equation}
    which is compared against a threshold $\tau$. If $\Lambda_{i,j} > \tau$ then an edge is declared, otherwise declare $\hat{\boldsymbol{\chi}}_{i,j}=0$. The threshold $\tau$ is chosen so that $\Lambda_{i,j} > \tau$ occurs with probability $\epsilon$ when $\boldsymbol{\chi}_{i,j}=0$ \textit{asymptotically}. We note that reducing the computational complexity of the inherent exhaustive DAG search for ML/GLRT schemes remains an open problem.
    %That is, as $n\to\infty$, the false positive rate for the test constructed in \cite{StriederDrtonconfidence} does not exceed $\epsilon$, and so is a valid solution for \eqref{NPCI}. Maximum likelihood estimates suffer from high computational complexity when estimating DAG structures since they often need to exhaustively search the space of all DAG structures, which grows super-exponentially in the number of vertices \cite{equivariance}. Because of these issues, an entire section in \cite{StriederDrtonconfidence} is dedicated to alleviating much of the computational burden of computing the GLRT.}
    \par
     To obtain the computational complexity of the GLRT, we use the procedure outlined in Section 5 of \cite{StriederDrtonconfidence}, with slight modifications to fit our problem (since our algorithm outputs a potential edge rather than a confidence interval), and can be found in the following GitHub repository: https://github.com/shaskajo/epsilonCUT/tree/main. The details of our implementation along with the full analysis of the computational complexity are given in the supplemental material: https://github.com/shaskajo/epsilonCUT. The computational complexity of the GLRT is $\mathcal{O}((1+d!)d^3(d-1)^2(n+d))$.
\subsection{Continuous Optimization Formulations}
A popular framework used in causal discovery is mapping the combinatorial optimization problem into a continuous one. Typically, a loss function is minimized subject to a constraint that ensures the outputted matrix corresponds to a DAG. Throughout this paper, we assume the loss function is the penalized squared loss,
 \begin{equation}
        \frac{1}{2n}\sum_{k=1}^n \|X_k - \boldsymbol{A}X_k \|_2^2 + \lambda \|\boldsymbol{A} \|_1,
    \end{equation}
    where $\lambda$ is a regularizer term, $\|X_k - \boldsymbol{A}X_k \|_2$ is the $l_2$ norm of the vector $X_k - \boldsymbol{A}X_k$ and $\|\boldsymbol{A} \|_1$ is the $l_1$ norm of the matrix $\boldsymbol{A}$. Then, the following algorithms differ only in what algebraic constraint is used to ensure $\boldsymbol{A}$ corresponds to a DAG.
\begin{enumerate}
    \item \textit{NOTEARS}: It is shown in \cite{NOTEARS} that a weighted adjacency matrix $\boldsymbol{A}$ represents a directed acyclic graph if and only if
    \begin{equation}
        \text{Trace}(e^{\boldsymbol{A}\circ \boldsymbol{A}}) = d,
    \end{equation}
    where $\circ$ denotes the Hadamard product (element-wise multiplication).  The elements of the continuous-valued estimate $\hat{\boldsymbol{A}}$ are thresholded to determine the inactive edges, note that this threshold is another hyperparamter to be tuned.
     We use the implementation publicly provided in \cite{NOTEARS}.
    \item DAGMA \cite{DAGMA} solves the continuous optimization problem subject to the constraint
    \begin{equation}
        -\log |s\boldsymbol{I}-\boldsymbol{A}\circ \boldsymbol{A}| + d\log s = 0, \label{DAGMAconstraint}
    \end{equation}
    where $s>0$ is a hyperparameter. Similarly to NOTEARS, \cite{DAGMA} shows that $\boldsymbol{A}$ is a DAG if and only if \eqref{DAGMAconstraint} holds for appropriately selected $s$. We use the implementation publicly provided in \cite{DAGMA}.
\end{enumerate}

\begin{table}[t]
    \centering
    \begin{tabular}{ |p{2cm}||p{5cm}|  }
 \hline
 \multicolumn{2}{|c|}{Computational Complexities} \\
 \hline
 Algorithm& Complexity\\
 \hline
 LASSO   & $\mathcal{O}(d^2n\min\{n,d\})$  \\
$\epsilon$-CUT&   $\mathcal{O}(d^3(d-1)4^{d-2}(n+d))$\\
 GLRT& $\mathcal{O}((d!+1)d^3(d-1)^2(n+d))$ \\
 \hline
\end{tabular}
    \caption{Computational complexities of LASSO, $\epsilon$-CUT, and the GLRT.}
    \label{tab:complexities}
\end{table}

\section{Numerical Results}\label{Sec::numerical}
\begin{figure*}
    \centering
    \begin{subfigure}{0.4\textwidth}
    \centering
    \includegraphics[scale=.35]{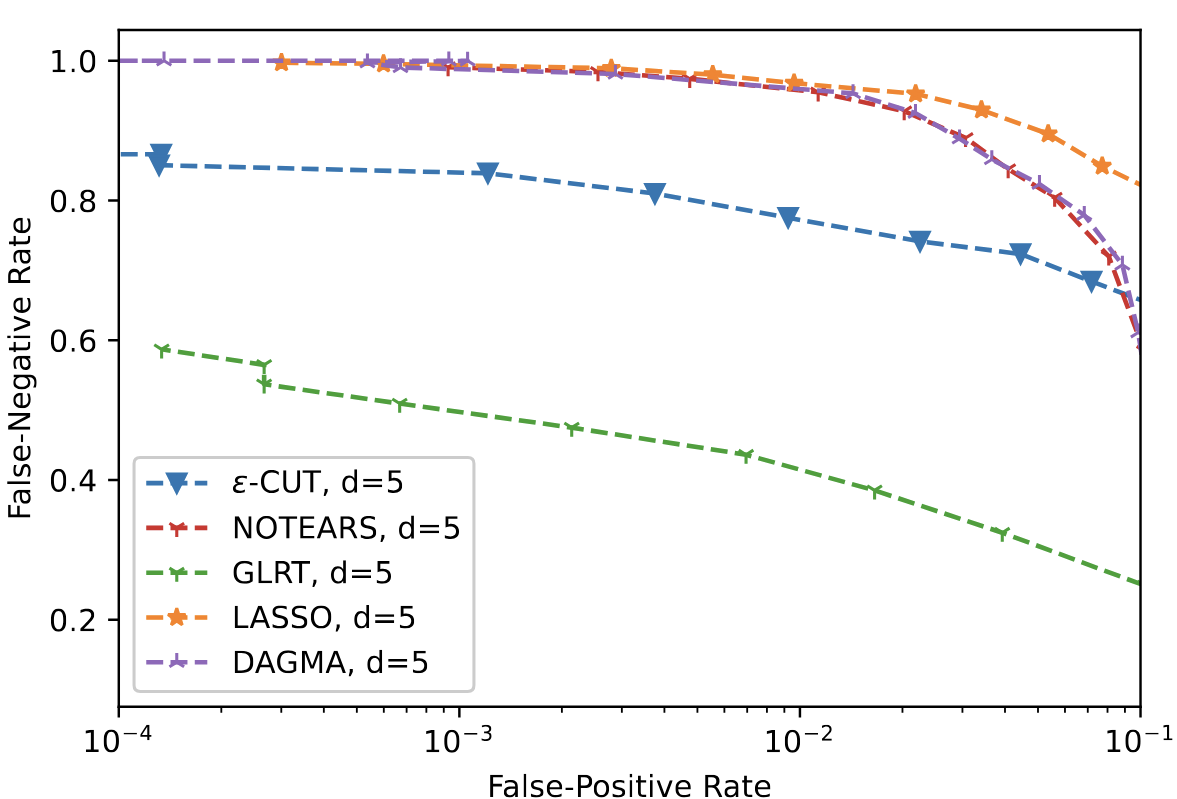}
    \caption{Performance curves for $d=5$.}
    \label{fig::p5a2}
    \end{subfigure}
    \begin{subfigure}{0.4\textwidth}
    \centering
    \includegraphics[scale=.35]{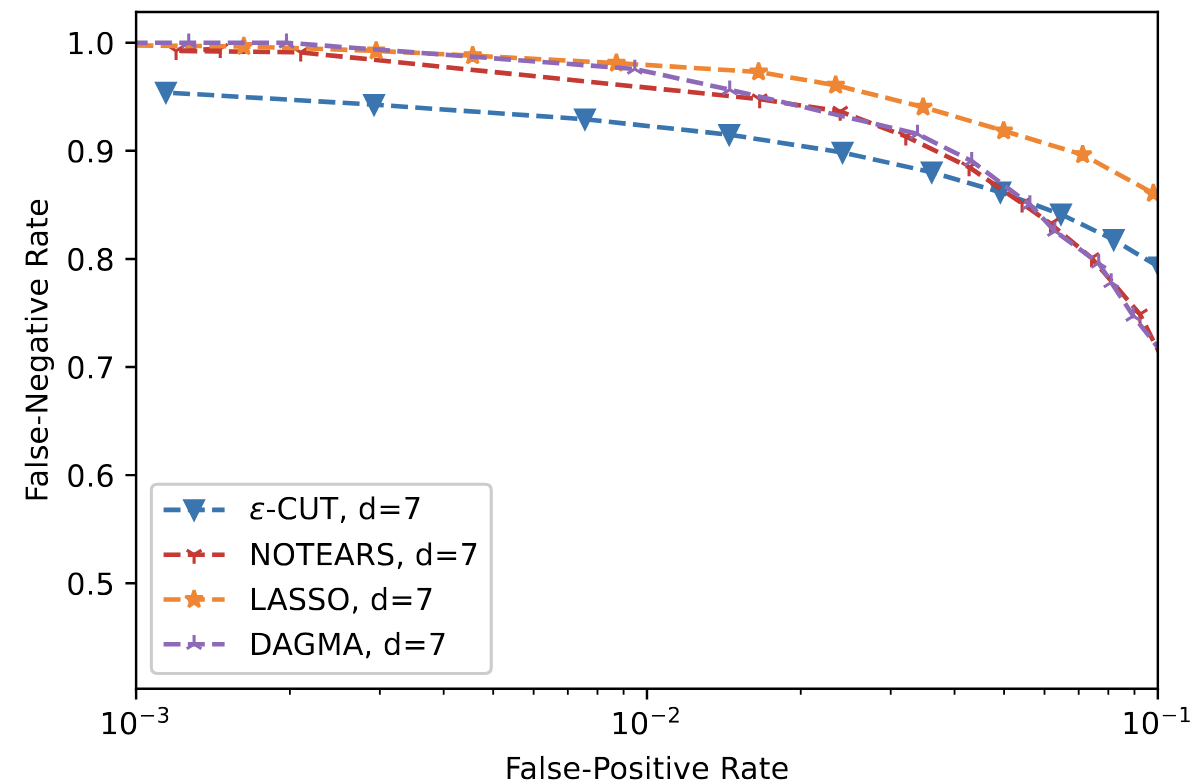}
    \caption{Performance curves for $d=7$.}
    \label{fig::p7a2}
    \end{subfigure}
    \begin{subfigure}{0.4\textwidth}
    \centering
    \includegraphics[scale=.35]{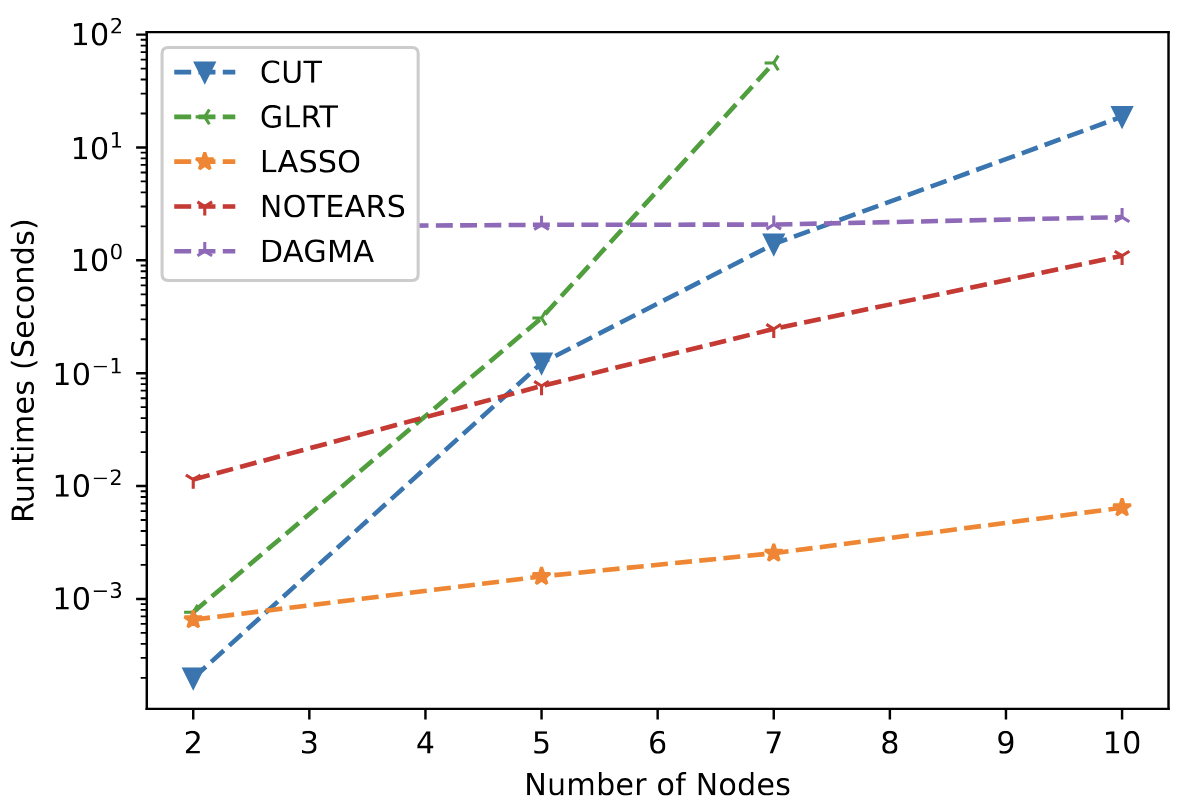}
    \caption{Log run-times on the various algorithms described in Section \ref{Sec::prioralgorithms}. Each curve is averaged over 100 runs.}
    \label{fig:runtimes}
    \end{subfigure}
    \begin{subfigure}{0.4\textwidth}
    \centering
    \includegraphics[scale=.35]{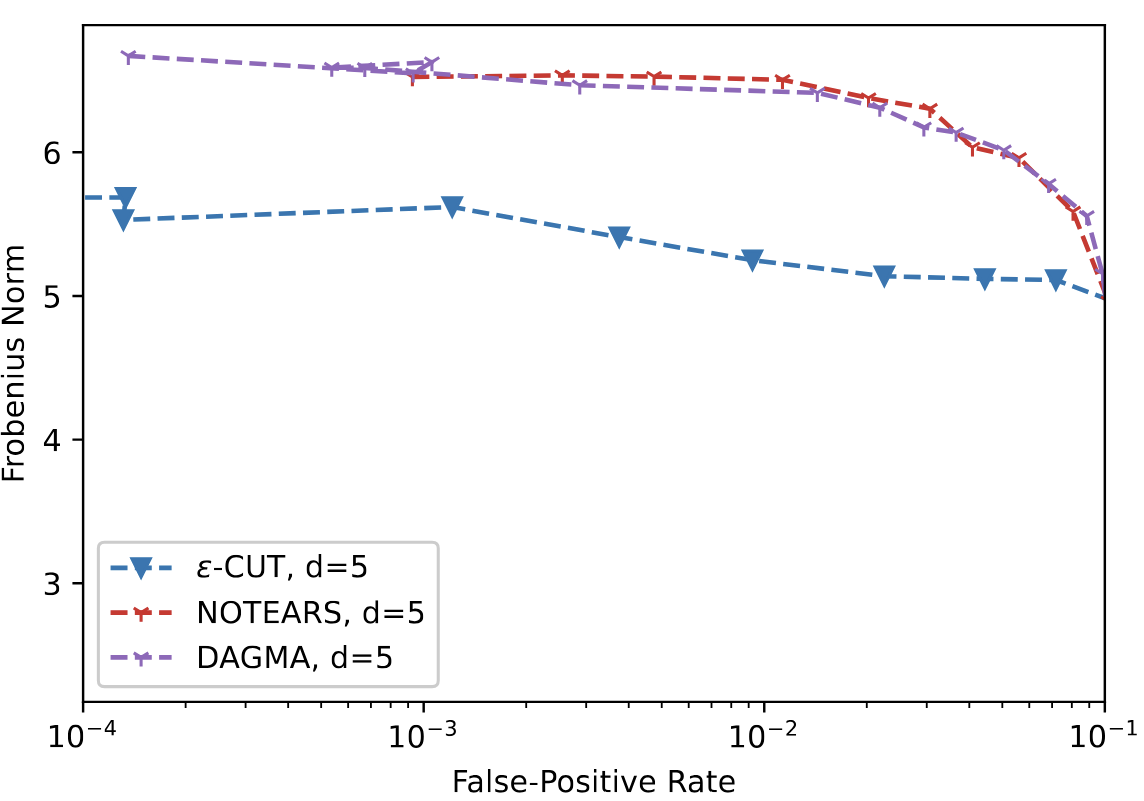}
    \caption{Frobenius norm between estimated DAG and true DAG for $d=5$ when $a\in\{1,2\}$.}
    \label{fig::MSE}
    \end{subfigure}
    \caption{Numerical results of the performance of various detectors. For all the cases, $\sigma^2=1$ and $n=10$ and the regularization constant $\lambda$ is varied for LASSO, NOTEARS, and DAGMA to control the error rates. The regularizer values used to generate the plots are given in the supplemental file: https://github.com/shaskajo/epsilonCUT. Performance curves are averaged over $1500$ iterations.}
    \label{fig:numerical}
    \vspace*{-0.2in}
\end{figure*}
We consider a numerical example to compare $\epsilon$-CUT to other popular algorithms that were summarized in Section \ref{Sec::prioralgorithms}. We examine some interesting phenomena and show the potential gains of $\epsilon$-CUT over the abovementioned algorithms.

\subsection{Definition of Graph Prior}
The number of DAGS grows super-exponentially in the number of vertices $d$ \cite{sprites_book_2017}, thus, generating all possible DAGS and uniform, random selection becomes computationally infeasible for increasing $d$. We use the following prior, $\pi_{\boldsymbol{A}}$ which is closely related to the priors used in \cite{HayekLower, NOTEARS}, and defined as follows:
\begin{enumerate}
    \item For a given number of vertices $d$, let $\boldsymbol{D}$ denote a $d\times d$ strictly lower triangular matrix. Then, the lower diagonal entries (support) are zeroed out randomly and independently. Then, for each remaining edge, the weight is selected uniformly at random from the set $[-5,-.5]\cup [.5, 5]$.
    \item Randomly relabel the vertices of $\boldsymbol{D}$. This operation is motivated by the fact that for any DAG, there exists a relabelling of the vertices such that the corresponding adjacency matrix is strictly lower triangular \cite{sprites_book_2017}.
\end{enumerate}

\subsection{Results}
Some numerical comparisons for a low number of samples ($n=10$) are given in Figure \ref{fig:numerical}. In Figure \ref{fig::p5a2} we compare the performance between $\epsilon-CUT$ and the previous state-of-the-art methods for $d=5$ vertices, and in Figure \ref{fig::p7a2} for $d=7$. In both cases, our algorithm is competitive with state-of-the-art methods (aside from the GLRT) in the low tolerance regime. In Figure \ref{fig::p5a2}, our method outperforms state-of-the-art methods by nearly $20\%$ in the low tolerance regime. While the GLRT significantly outperforms the other methods, it can be seen from Table \ref{tab:complexities} and Figure \ref{fig:runtimes} that the GLRT suffers from high computational complexity even for moderately sized graphs (this complexity is the reason that the GLRT is omitted from Figure \ref{fig::p7a2}, its complexity is too high). Moreover, in addition to $\epsilon-CUT$ offering competitive performance, we emphasize that, unlike LASSO, NOTEARS, and DAGMA, our method \textit{does not require hyperparameter tuning}, in addition to possessing \textit{finite-sample performance guarantees}.
\par
There is another interesting point to be made about the numerical results. First, from the definition of $\pi_{\boldsymbol{A}}$ and the description of the algorithms used, the edge error probabilities $P^+_{i,j}$ and $Q^-_{i,j}$ are the same for all pairs $(i,j)$. Hence, $\epsilon^+ = \sum_{i,j}w^+_{i,j}P^+_{i,j} = P^+_{1,2} \sum_{i,j}w^+_{i,j} = P^+_{1,2}$, and similarly for $\epsilon^-$. Hence, the curves figures \ref{fig::p5a2} and \ref{fig::p7a2} are also for the \textit{probability of error of individual edges with finite samples}. This is in contrast to most theoretical guarantees, which often deal with full graph recovery \cite{equivariance, NOTEARS,DAGMA}, or connected neighborhoods in the finite sample setting \cite{lasso}.

\subsection{Runtime Comparisons}
We compare the runtimes for the various algorithms in Figure \ref{fig:runtimes}. Surprisingly, $\epsilon-CUT$ is competitive with continuous optimization formulations (DAGMA and NOTEARS) even for moderately sized graphs. Of course, as the number of vertices increases, $\epsilon-CUT$ begins to fall behind DAGMA and NOTEARS. This is to be expected since $\epsilon-CUT$ requires an exhaustive search procedure for each pair of vertices to satisfy the false-positive constraint, leading to increased runtimes.
We reiterate that $\epsilon-CUT$ does not require hyperparameter tuning to satisfy the false-positive constraint. %However, the runtime comparisons lead to an interesting potential direction for future work. Say for instance that instead of controlling the false positive rate for all edges, we wish to control the false positive rate for \textit{only a subset of edges}. Then, one could use a continuous optimization formulation to estimate the overall graph and use $\epsilon-CUT$ on the subset of edges whose false positive rate must be controlled. This allows one to reduce runtime by using continuous optimization formulations while keeping the finite-sample guarantees provided by $\epsilon-CUT$.}

\subsection{Recovering Directions From Support}
Throughout this paper, we have focused only on the support recovery of a directed acyclic graph as this is the most challenging part of the graph identification problem. Note that the maximum likelihood estimate of the underlying DAG structure without any sparsity constraints can be found by considering only complete DAGs, i.e., DAGs with the maximum number of edges (Proposition 3.2 in \cite{StriederDrtonconfidence}); thus one must consciously enforce sparsity. However, if one does not care to enforce sparsity during the maximum likelihood estimation, then the maximum likelihood estimate may be found efficiently \cite{Chen_2019_ordering,StriederDrtonconfidence}.  Thus, the maximum likelihood estimate can be used to estimate link direction and the associated weights.
\par
We first use $\epsilon-CUT$ to estimate the support $\hat{\boldsymbol{\chi}}$. Then, we then compute the \textit{unrestricted maximum likelihood estimate},
%In particular, the guarantees provided by $\epsilon-CUT$ only apply to support recovery and not the directions of underlying edges. 

%However, we argue that most of the difficulties surrounding estimating causal DAG structures stem from enforcing sparsity on the underlying graph structure. To see this, suppose we are only interested in finding the maximum likelihood estimate of the underlying DAG structure without any sparsity constraints. Then, one can show that the maximum likelihood estimate will always yield a fully connected directed graph, i.e., a DAG with the maximum number of edges (Proposition 3.2 in \cite{StriederDrtonconfidence}).

%Hence, the problem of finding the maximum likelihood estimate reduces to finding the causal ordering, since finding the edge weights is a simple regression problem if one knows the ordering. 
% Hence, our algorithm, designed specifically for support recovery, can be used together with maximum likelihood estimation to enforce sparsity on the 
\begin{equation}
    \min_{A \in \mathcal{M}} \frac{nd}{2}\log(2\pi\sigma^2) + \frac{1}{\sigma^2}\text{Trace}((\boldsymbol{I}-\boldsymbol{A})^\top (\boldsymbol{I}-\boldsymbol{A}\hat{\Sigma})
\end{equation}
where $\mathcal{M}$ is the set of all DAGs, and $\hat{\Sigma}$ is the empirical covariance matrix. Denoting the maximum likelihood estimate as $\hat{\boldsymbol{A}}_{ML}$, the final estimate is $\hat{\boldsymbol{A}} = \hat{\boldsymbol{A}}_{ML} \circ \hat{\boldsymbol{\chi}}$ which contains the edge weights and directions of the causal graph while inheriting
the theoretical guarantees on the sparsity provided by $\epsilon-CUT$. In Fig. \ref{fig::MSE} we plot the Frobenius norm between the true matrix $\boldsymbol{A}$ and the estimated matrix $\hat{\boldsymbol{A}}$ obtained from the procedure described above. In addition, we compare the performance of this procedure with DAGMA and NOTEARS, both of which output a DAG (as opposed to only the support). Except in the regime for larger false alarm rates, $\epsilon$-cut yields superior performance over NOTEARS and DAGMA for full DAG recovery.  Thus,
our method is competitive with both continuous optimization formulations, despite focusing only on support recovery. %Hence, it is relatively straightforward to extend $\epsilon-CUT$ from a pure support recovery algorithm to a complete and competitive causal discovery algorithm while maintaining theoretical support guarantees.}

%\subsection{\textcolor{blue}{Relaxing the Equal Variance Assumption}}

%\textcolor{blue}{The assumption of equal variances is not unreasonable in fields such as biology. In particular, \cite{equivariance} develops an ad-hoc method for testing whether the assumption of equal variances is valid for a specific dataset. This method is then used to argue that the assumption of equal variance in the dataset of \cite{dettling2003boosting} is justified for specific microarray data.}
%\par
%\textcolor{blue}{
%It is also discussed in \cite{equivariance} that the assumption of equal variances can also be relaxed. Specifically, if the covariance matrix of the error terms is instead given as
%\begin{equation}
%    \begin{bmatrix}
%    c_1\sigma^2 & 0 & \dots & 0 \\
%    0 & c_2\sigma^2 & \dots & 0 \\ 
%    \vdots & \vdots & \ddots & \vdots \\ 
%    0 & 0 & \dots & c_d\sigma^2 \\ 
%    \end{bmatrix}
%\end{equation}
%with \textit{known} constants $c_1,c_2,...,c_d$, then all theoretical results still hold. That is, the error terms do not have equal variances, but the variances are known to be proportional, removing the need for homogeneity among the
%vertices. This is because for each node $i$, one can apply the transformation $X(i) \mapsto \frac{1}{c_i}X(i)$ which results in a system with equal error covariances.}

%\section{Proofs}\label{Sec::Proofs}
%\input{Proofs}

\section{Conclusions}\label{Sec::concl}
We have introduced a framework for \textit{Neyman-Pearson causal discovery}. Often, one needs to control one type of edge error, so our framework seeks to minimize one error rate subject to the other being below a user-specified tolerance level. This approach allowed us to derive the optimal detector for this problem. We derived finite sample performance bounds for this optimal detector. Surprisingly,  there is a strong performance gap between current methods and the optimal detector. Unfortunately, the optimal detector becomes computationally infeasible even for modestly sized graphs, which motivated us to derive $\epsilon-CUT$, which scales more favorably with graph size. We showed that $\epsilon-CUT$ provides finite-sample guarantees, and hence is feasible in the Neyman-Pearson causal discovery framework. In addition, we showed that $\epsilon-CUT$ is competitive with state-of-the-art methods \textit{without the need to tune regularizers or hyper-parameters while offering performance guarantees in the finite sample regime.} Finally, we considered the ability to recover both directions and edge weights, since $\epsilon-CUT$ only recovers support, and show that our method in tandem with unrestricted maximum likelihood estimations is better than state-of-the-art methods that output the full DAG for small false-alarm rate constraints.

\appendix
\subsection{Definitions}\label{AppenDefinitions}
\begin{definition}
    The \underline{R\'enyi divergence of order $\lambda$} between two probability measures $P$ and $Q$ is given as
    \begin{equation}
        D_{\lambda}(P||Q) \doteq \frac{1}{\lambda-1} \log \int_\mathcal{X} \Big( \frac{dP}{dQ} \Big)^{\lambda}dQ,
    \end{equation}
    where $\frac{dP}{dQ}$ is the Radon-Nikodym derivative of P with respect to Q. 
\end{definition}
\begin{definition}
    For any two probability measures $P$ and $Q$, let
    \begin{align}
        &D_{\lambda}'(P||Q) \doteq \int_\mathcal{X} F_{\lambda}(x;P,Q) \log \frac{dP}{dQ}, \\
         &D_{\lambda}''(P||Q) \doteq \int_\mathcal{X} F_{\lambda}(x;P,Q) \bigg(\log \frac{dP}{dQ} \bigg)^2 -\Big( D'_{\lambda}(P||Q) \Big)^2,
       \\
&\mbox{where}    \;\;\;      F_\lambda(x;P,Q)  \doteq \frac{f(x)^\lambda g(x)^{1-\lambda}}{\int_{\mathcal{X}} f(x)^\lambda g(x)^{1-\lambda} d\mu}, \; \; \;  \mbox{for} \;\;\lambda\in[0,1].
    \end{align}
\end{definition}
\subsection{Proof of Proposition \ref{upperbound}}\label{ProofofUpper}
\begin{proof}
for any $i,j$, we have
\begin{align}
    Q^-_{i,j} &\leq Q_{i,j}\Big(\frac{dP_{i,j}}{dQ_{i,j}} \geq \frac{w_{i,j}^-}{w_{i,j}^+}\gamma\Big) \\
    &= \int_{\mathcal{X}} \mathbbm{1} \Big\{ \frac{dP_{i,j}}{dQ_{i,j}} \geq \frac{w_{i,j}^-}{w_{i,j}^+}\gamma \Big\} dQ_{i,j}
    \\
    &\overset{(a)}{\leq}  \int_{\mathcal{X}} \Big( \frac{w_{i,j}^+}{w_{i,j}^1} \frac{1}{\gamma}\frac{dP_{i,j}}{dQ_{i,j}} \Big)^\lambda dQ_{i,j}
    \\
    &= \Big(\frac{w_{i,j}^+}{w_{i,j}^+}\Big)^\lambda\frac{1}{\gamma^\lambda} \int_{\mathcal{X}} \Big( \frac{dP_{i,j}}{dQ_{i,j}} \Big)^{\lambda} dQ_{i,j},
\end{align}
where $(a)$ holds since $\mathbbm{1}\{a\geq\gamma\}\leq \Big(\frac{a}{\gamma}\Big)^\lambda$ for any $a,\gamma,\lambda > 0$. Since, the above holds for any $\lambda>0$. Multiplying by $w^-_{i,j}$ yields
\begin{equation}
    w^-_{i,j}Q^-_{i,j}\leq (w^-_{i,j})^{1-\lambda} (w^+_{i,j})^{\lambda} \frac{1}{\gamma^{\lambda}} e^{-(1-\lambda)D_{\lambda}(P_{i,j}||Q_{i,j})}.
\end{equation}
Summing over $i,j$ completes the proof.
\end{proof}
\subsection{Proof of Proposition \ref{converse}}\label{ProofofConverse}
\begin{proof}
    It is relatively easy to show that for any pair $(i,j)$,
    \begin{align}
        Q_{i,j}(X_1,&..,X_n) = \exp\{-(1-\lambda)D_{\lambda}(P_{i,j}||Q_{i,j})\\
        & {-\lambda \log \frac{dP_{i,j}}{dQ_{i,j}} \}F_\lambda(X_1,...,X_n;P_{i,j},Q_{i,j})}, \\
        P_{i,j}(X_1,&..,X_n) =  \exp\{-(1-\lambda)D_{\lambda}(P_{i,j}||Q_{i,j})\\
        & +(1-\lambda) \log \frac{dP_{i,j}}{dQ_{i,j}}\} F_{\lambda}(X_1,...,X_n;P_{i,j},Q_{i,j}).
    \end{align}
    Then, consider the following sets
    \begin{align}
    \begin{split}
        \mathcal{X}_{i,j,\lambda} &= \bigg\{X_1,...,X_n : |\log \frac{dP_{i,j}}{dQ_{i,j}} - D'_{\lambda}(P_{i,j}||Q_{i,j})| \leq 
        \\ & \qquad \qquad \qquad \qquad \sqrt{ 2D_{\lambda}''(P_{i,j}||Q_{i,j})} \bigg\}, 
        \end{split}
        \\
        \mathcal{X}^0_{i,j} &= \bigg\{X_1,...,X_n : \hat{\boldsymbol{\chi}}_{i,j} = 0 \bigg\}, \\
        \mathcal{X}^1_{i,j} &= \bigg\{X_1,...,X_n : \hat{\boldsymbol{\chi}}_{i,j} = 1 \bigg\}.
    \end{align}
    Hence, for any $X_1,...,X_n \in \mathcal{X}_{i,j,\lambda}$, we have that
    \begin{align}
    \begin{split}
        &Q_{i,j}(X_1,..,X_n) \geq
        \\
        &\exp\bigg\{-(1-\lambda)D_{\lambda}(P_{i,j}||Q_{i,j}) -\lambda D'_{\lambda}(P_{i,j}||Q_{i,j}) 
        \\
        &\qquad  -\lambda \sqrt{2D_{\lambda}''(P_{i,j}||Q_{i,j})} \bigg\} F_\lambda(X_1,...,X_n;P_{i,j},Q_{i,j}),\label{lowerQ} 
        \end{split}
        \\
        \begin{split}
        &P_{i,j}(X_1,..,X_n) \geq 
        \\
        &\exp\bigg\{-(1-\lambda)D_{\lambda}(P_{i,j}||Q_{i,j}) +(1-\lambda)  D'_{\lambda}(P_{i,j}||Q_{i,j}) 
        \\
        & -(1-\lambda )\sqrt{2D_{\lambda}''(P_{i,j}||Q_{i,j})}\bigg\} F_\lambda(X_1,...,X_n;P_{i,j},Q_{i,j}). \label{lowerP}
        \end{split}
    \end{align}
    Notice that $D'_{\lambda}(P_{i,j}||Q_{i,j})$ is actually the mean of $\log \frac{dP_{i,j}}{dQ_{i,j}}$ with respect to the distribution $F_\lambda$, and $D''_{\lambda}(P_{i,j}||Q_{i,j})$ is its variance. Then, $\mathcal{X}_{i,j,\lambda}$ is the event that the log-likelihood ratio is within $\sqrt{2}$ standard deviations of its mean, and so from Chebychev's inequality,
    \begin{equation}
        \int_{\mathcal{X}_{i,j,\lambda}} F_\lambda(X_1,...,X_n;P_{i,j},Q_{i,j}) \geq \frac{1}{2}.
    \end{equation}
    Then, from the union bound, we have that
    \begin{equation}\label{chebybound}
        \begin{aligned}
            &\int_{\mathcal{X}_{i,j,\lambda\cap \mathcal{X}^0_{i,j}}} F_\lambda(X_1,...,X_n;P_{i,j},Q_{i,j}) 
            \\
            & \qquad \qquad + \int_{\mathcal{X}_{i,j,\lambda\cap \mathcal{X}^1_{i,j}}} F_\lambda(X_1,...,X_n;P_{i,j},Q_{i,j}) \geq \frac{1}{2}.
        \end{aligned}
    \end{equation}
    From \eqref{lowerP}, we have that
    \begin{align}
    &\int_{\mathcal{X}_{i,j,\lambda\cap \mathcal{X}^1_{i,j}}} F_\lambda(X_1,...,X_n;P_{i,j},Q_{i,j})
    \\
    \begin{split}
    &\leq \exp\bigg\{(1-\lambda)D_{\lambda}(P_{i,j}||Q_{i,j}) -(1-\lambda)  D'_{\lambda}(P_{i,j}||Q_{i,j}) 
    \\
    & +(1-\lambda )\sqrt{2D_{\lambda}''(P_{i,j}||Q_{i,j})}\bigg\} \int_{\mathcal{X}_{i,j,\lambda\cap \mathcal{X}^1_{i,j}}} P_{i,j}(X_1,...,X_n)
    \end{split}
    \\
    \begin{split}
    &\leq \exp\bigg\{(1-\lambda)D_{\lambda}(P_{i,j}||Q_{i,j}) -(1-\lambda)  D'_{\lambda}(P_{i,j}||Q_{i,j})
    \\
    & \qquad \qquad \qquad \qquad +(1-\lambda )\sqrt{2D_{\lambda}''(P_{i,j}||Q_{i,j})}\bigg\}  P^+_{i,j}
    \end{split}
    \end{align}
    and so
    \begin{equation}\label{lowerF}
        \begin{aligned}
            &\int_{\mathcal{X}_{i,j,\lambda\cap \mathcal{X}^0_{i,j}}} F_\lambda(X_1,...,X_n;P_{i,j},Q_{i,j}) \geq \frac{1}{2} - 
            \\
            &\exp\bigg\{(1-\lambda)D_{\lambda}(P_{i,j}||Q_{i,j}) -(1-\lambda)  D'_{\lambda}(P_{i,j}||Q_{i,j})
            \\
            &\qquad \qquad +(1-\lambda )\sqrt{2D_{\lambda}''(P_{i,j}||Q_{i,j})}\bigg\}P_{i,j}^+.
        \end{aligned}
    \end{equation}
    Then, \eqref{lowerQ} gives
    \begin{equation}
        \begin{aligned}
            &Q_{i,j}^- \geq  \exp\bigg\{-(1-\lambda)D_{\lambda}(P_{i,j}||Q_{i,j})
            \\
            &\qquad \qquad -\lambda D'_{\lambda}(P_{i,j}||Q_{i,j})-\lambda \sqrt{2D_{\lambda}''(P_{i,j}||Q_{i,j})} \exp\bigg\}
            \\&\qquad \qquad \qquad \int_{\mathcal{X}_{i,j,\lambda\cap \mathcal{X}^0_{i,j}}} F_\lambda(X_1,...,X_n;P_{i,j},Q_{i,j}),
        \end{aligned}
    \end{equation}
    together with \eqref{lowerF} we have
    \begin{equation}
        \begin{aligned}
             &Q_{i,j}^- \geq  \frac{1}{2}\exp\bigg\{-(1-\lambda)D_{\lambda}(P_{i,j}||Q_{i,j})-\lambda D'_{\lambda}(P_{i,j}||Q_{i,j})
             \\
             & \qquad-\lambda \sqrt{2D_{\lambda}''(P_{i,j}||Q_{i,j})} \bigg\} - \exp\bigg\{-D'_{\lambda}(P_{i,j}||Q_{i,j})
             \\
             & \qquad \qquad \qquad \qquad +(1-2\lambda)\sqrt{2D_{\lambda}''(P_{i,j}||Q_{i,j})}\bigg\} P_{i,j}^+.
        \end{aligned}
    \end{equation}
    Multiplying both sides by $w^-_{i,j}$ and summing over all $(i,j)$ pairs gives us
    \begin{equation}
        \begin{aligned}
        &\sum_{i,j} w^-_{i,j}Q_{i,j}^- \geq \frac{1}{2}\sum_{i,j} w^-_{i,j}\exp\bigg\{-(1-\lambda)D_{\lambda}(P_{i,j}||Q_{i,j})
        \\
        & -\lambda D'_{\lambda}(P_{i,j}||Q_{i,j})-\lambda \sqrt{2D_{\lambda}''(P_{i,j}||Q_{i,j})} \bigg\}  - \sum_{i,j} w^+_{i,j} \frac{w^-_{i,j}}{w^+_{i,j}}
             \\
             &\exp\bigg\{-D'_{\lambda}(P_{i,j}||Q_{i,j})+(1-2\lambda)\sqrt{2D_{\lambda}''(P_{i,j}||Q_{i,j})}\bigg\} P_{i,j}^+,
        \end{aligned}
    \end{equation}
    which yields 
     \begin{equation}
        \begin{aligned}
        &\sum_{i,j} w^-_{i,j}Q_{i,j}^- \geq \frac{1}{2}\sum_{i,j} w^-_{i,j}\exp\bigg\{-(1-\lambda)D_{\lambda}(P_{i,j}||Q_{i,j})
        \\
        & -\lambda D'_{\lambda}(P_{i,j}||Q_{i,j})-\lambda \sqrt{2D_{\lambda}''(P_{i,j}||Q_{i,j})} \bigg\}-  \max_{i,j}\bigg\{ \frac{w^-_{i,j}}{w^+_{i,j}}
             \\
             &\exp\bigg\{-D'_{\lambda}(P_{i,j}||Q_{i,j})+(1-2)\lambda \sqrt{2D_{\lambda}''(P_{i,j}||Q_{i,j})}\bigg\}\bigg\} 
             \\
             & \qquad \qquad \qquad \qquad \qquad \qquad \qquad \qquad \qquad  \sum_{i,j} w^+_{i,j} P_{i,j}^+.
        \end{aligned}
    \end{equation}
    Noticing that $\sum_{i,j} w^+_{i,j} P_{i,j}^+ = \epsilon^+ \leq \epsilon$ completes the proof.
\end{proof}

\subsection{Proof of Theorem \ref{optimalrule}.}\label{proof::optimal}
We first present the proof of Theorem \ref{optimalrule}, which closely resembles that of the Neyman-Pearson Lemma (Proposition II.D.1 in \cite{poor2013introduction}). Nonetheless, we include it here for completeness.
\begin{proof}
We proceed first by showing existence. Let $\gamma \geq 0$ be the largest $\gamma$ such that
\begin{equation}
    \sum_{i,j} w^+_{i,j} P_{i,j} \bigg( \frac{dP_{i,j}}{dQ_{i,j}} < \frac{w_{i,j}^-}{w_{i,j}^+}\gamma\bigg) \leq \epsilon.
\end{equation}
Then, if the inequality is strict, select $\eta$ to be
\begin{equation}
    \eta = \frac{\epsilon - \sum_{i,j} w^+_{i,j} P_{i,j} \bigg( \frac{dP_{i,j}}{dQ_{i,j}} < \frac{w_{i,j}^-}{w_{i,j}^+}\gamma\bigg)}{\sum_{i,j} w^+_{i,j} P_{i,j} \bigg( \frac{dP_{i,j}}{dQ_{i,j}} = \frac{w_{i,j}^-}{w_{i,j}^+}\gamma\bigg)}
\end{equation}
otherwise choose $\eta$ arbitrarily. Hence, we have that
\begin{align}
\begin{split}
    \epsilon^+ &=  \sum_{i,j} w^+_{i,j}P_{i,j}^+ = \sum_{i,j} w^+_{i,j}\Bigg( P_{i,j} \bigg( \frac{dP_{i,j}}{dQ_{i,j}} < \frac{w_{i,j}^-}{w_{i,j}^+}\gamma\bigg)
    \\
    & \qquad \qquad \qquad \qquad \qquad + \eta P_{i,j} \bigg( \frac{dP_{i,j}}{dQ_{i,j}} = \frac{w_{i,j}^-}{w_{i,j}^+}\gamma\bigg)\Bigg)
    \end{split}
    \\
    \begin{split}
    &= \sum_{i,j} w_{i,j}^+ P_{i,j} \bigg( \frac{dP_{i,j}}{dQ_{i,j}} < \frac{w_{i,j}^-}{w_{i,j}^+}\gamma\bigg) 
    \\
    &\qquad \qquad + \eta \sum_{i,j} w_{i,j}^+ P_{i,j} \bigg( \frac{dP_{i,j}}{dQ_{i,j}} = \frac{w_{i,j}^-}{w_{i,j}^+}\gamma\bigg) = \epsilon.
    \end{split}
\end{align}
We next show that threshold rules are optimal. We use the Lagrange multiplier $\lambda_0 > 0$ (if the optimal estimator is such that $\lambda_0=0$ then the constraint is inactive, and so the optimal estimator should always declare $\hat{\boldsymbol{\chi}}_{i,j}=1$) and seek to minimize $\epsilon^- + \lambda_0 \epsilon^+$. Then, we have
\begin{align}
\begin{split}
    &\epsilon^- + \lambda_0 \epsilon^+ = \sum_{i,j} w^-_{i,j} \mathbb{P}(\hat{\boldsymbol{\chi}}_{i,j}=0|\boldsymbol{\chi}_{i,j}=1)
    \\
    & \qquad \qquad + \lambda_0 \sum_{i,j} w^+_{i,j} \mathbb{P}(\hat{\boldsymbol{\chi}}_{i,j}=1|\boldsymbol{\chi}_{i,j}=0)
    \end{split}
    \\
    \begin{split}
    = &\sum_{i,j} \int_{X_1,...,X_n} w_{i,j}^-\mathbb{P}(\hat{\boldsymbol{\chi}}_{i,j}=0| X_1,...,X_n) Q_{i,j}(X_1,..,X_n)
    \\
    & \qquad \qquad +\lambda_0 w_{i,j}^+ \mathbb{P}(\hat{\boldsymbol{\chi}}_{i,j}=1| X_1,...,X_n) P_{i,j} (X_1,..,X_n).
    \end{split}
\end{align}
Then, to minimize $\epsilon^- + \lambda_0 \epsilon^+$, the estimator $\hat{\boldsymbol{\chi}}$ should be such that $\hat{\boldsymbol{\chi}}_{i,j}=0$ if $w^+_{i,j}Q_{i,j} \leq \lambda_0 w_{i,j}^-P_{i,j}(X_1,...,X_n)$ and $\hat{\boldsymbol{\chi}}_{i,j}=1$ otherwise. Since threshold rules are optimal, it remains to be seen how to select the threshold $\gamma$. Observe that the probabilities
$P_{i,j} \bigg( \frac{dP_{i,j}}{dQ_{i,j}} < \frac{w_{i,j}^-}{w_{i,j}^+}\gamma\bigg)$ are increasing in $\gamma$ for all pairs $(i,j)$, and that the probabilities $Q_{i,j} \bigg( \frac{dP_{i,j}}{dQ_{i,j}} > \frac{w_{i,j}^-}{w_{i,j}^+}\gamma\bigg)$ are decreasing in $\gamma$. Hence, $\epsilon^+$ is increasing in $\gamma$, and $\epsilon^-$ is decreasing in $\gamma$. Then, to minimize $\epsilon^-$, $\gamma$ should be chosen to make $\epsilon^+$ as large as possible. 
\end{proof}
\subsection{Proof of Theorem \ref{satisfyconstraint}.} \label{proof::eCUT}
\begin{proof}
First, recall that $\epsilon^+ = \sum_{i,j} w^+_{i,j}P_{i,j}^+$, and so if it holds that for all $i,j$, $P^+_{i,j} \leq \epsilon$, we get $\epsilon^+ = \sum_{i,j} w^+_{i,j}P_{i,j}^+ \leq \epsilon \sum_{i,j} w^+_{i,j} = \epsilon$. Then, it suffices to show that for all $i,j$, $P^+_{i,j} \leq \epsilon$. We begin by writing
\begin{equation}
    \begin{aligned}
&P^+_{i,j} = \int_{\boldsymbol{A}}\mathbb{P}_{\boldsymbol{A}}(\hat{\boldsymbol{\chi}}_{i,j}=1|\boldsymbol{\chi}_{i,j}=0) \mathbb{P}(\boldsymbol{A}|\boldsymbol{\chi}_{i,j}=0) 
\\
&\overset{(a)}{=} \int_{\boldsymbol{A}}\mathbb{P}_{\boldsymbol{A}}(\hat{\boldsymbol{\chi}}_{i,j}=1) \mathbb{P}(\boldsymbol{A}|\boldsymbol{\chi}_{i,j}=0) 
\\ &= \int_{\boldsymbol{A}}\int_{\boldsymbol{X}^{\setminus i,j}}\mathbb{P}_{\boldsymbol{A}}(\hat{\boldsymbol{\chi}}_{i,j}=1|\boldsymbol{X}^{\setminus i,j}) \mathbb{P}_{\boldsymbol{A}}(\boldsymbol{X}^{\setminus i,j}) \mathbb{P}(\boldsymbol{A}|\boldsymbol{\chi}_{i,j}=0)
    \end{aligned}
\end{equation}
where $\boldsymbol{X}^{\setminus i,j}$ denotes the set of all measurements except those from the $i$th and $j$th vertex. That is, $\boldsymbol{X}^{\setminus i,j} = \{X^{\setminus i, j}_k\}_{k=1}^n$ where $X^{\setminus i, j}_k = [X_k(1),...,X_k(i-1),X_k(i+1),...,X_k(j-1), X_k(j+1),...,X_k(d)]^\top$. $(a)$ holds since $\boldsymbol{\chi}_{i,j}$ is a deterministic function of $\boldsymbol{A}$ which gives us
\begin{align}
    &\mathbb{P}_{\boldsymbol{A}}(\hat{\boldsymbol{\chi}}_{i,j}=1|\boldsymbol{\chi}_{i,j}=0) = \frac{\mathbb{P}(\hat{\boldsymbol{\chi}}_{i,j}=1, \boldsymbol{\chi}_{i,j}=0,\boldsymbol{A})}{\mathbb{P}(\boldsymbol{A},\boldsymbol{\chi}_{i,j}=0)}
    \\
    &= \frac{\mathbb{P}(\boldsymbol{\chi}_{i,j}=0| \hat{\boldsymbol{\chi}}_{i,j}=1, \boldsymbol{A})\mathbb{P}_{\boldsymbol{A}}(\hat{\boldsymbol{\chi}}_{i,j}=1)\pi_{\boldsymbol{A}}}{\mathbb{P}(\boldsymbol{\chi}_{i,j}=0|\boldsymbol{A})\pi_{\boldsymbol{A}}}
    \\
     &= \frac{\mathbb{P}(\boldsymbol{\chi}_{i,j}=0| \boldsymbol{A})\mathbb{P}_{\boldsymbol{A}}(\hat{\boldsymbol{\chi}}_{i,j}=1)\pi_{\boldsymbol{A}}}{\mathbb{P}(\boldsymbol{\chi}_{i,j}=0|\boldsymbol{A})\pi_{\boldsymbol{A}}}
     = \mathbb{P}_{\boldsymbol{A}}(\hat{\boldsymbol{\chi}}_{i,j}=1).
\end{align}
Now, if we can say that for any $\boldsymbol{A}$ and $\boldsymbol{X}^{\setminus i,j}$ we have $\mathbb{P}_{\boldsymbol{A}}(\hat{\boldsymbol{\chi}}_{i,j}=1|\boldsymbol{X}^{\setminus i,j}) \leq \epsilon$,
then the proof is complete since that would give us 
\begin{equation}
    \begin{aligned}
\int_{\boldsymbol{A}}\int_{\boldsymbol{X}^{\setminus i,j}}\mathbb{P}_{\boldsymbol{A}}(\hat{\boldsymbol{\chi}}_{i,j}=1|\boldsymbol{X}^{\setminus i,j}) \mathbb{P}_{\boldsymbol{A}}(\boldsymbol{X}^{\setminus i,j}) \mathbb{P}(\boldsymbol{A}|\boldsymbol{\chi}_{i,j}=0)
    \\
    \leq \epsilon \int_{\boldsymbol{A}}\int_{\boldsymbol{X}^{\setminus i,j}} \mathbb{P}_{\boldsymbol{A}}(\boldsymbol{X}^{\setminus i,j}) \mathbb{P}(\boldsymbol{A}|\boldsymbol{\chi}_{i,j}=0) = \epsilon.
    \end{aligned}
\end{equation}

We focus on the event $\hat{\boldsymbol{\chi}}_{i,j}=1$. 
Notice that our algorithm only declares an edge between the vertices $i$ and $j$ if for \textit{all} sets of potential parents, which we denote as $\hat{\mathcal{Z}}(i)$ and $\hat{\mathcal{Z}}(j)$, respectively, we have that $|\hat{\sigma}^{2}_i - \hat{\sigma}^{2}_j| >  \tau_{\hat{\mathcal{Z}}(i),\hat{\mathcal{Z}}(j)}$,
where
\begin{align}
    \hat{\sigma}^{2}_i &= \sum_{k=1}^n \big(X_k(i) - \hat{\alpha}^\top \hat{Z}_k(i)\big)^2, \\
    \hat{\sigma}^{2}_j &=\sum_{k=1}^n \big(X_k(j) - \hat{\beta}^\top \hat{Z}_k(j)\big)^2,
\end{align}
and the vectors $\hat{Z}_k(i)$ and $\hat{Z}_k(i)$ contain the $k$th measurements of the vertices in $\hat{\mathcal{Z}}(i)$ and $\hat{\mathcal{Z}}(j)$, respectively. The vectors $\hat{\alpha}$ and $\hat{\beta}$ are the resulting coefficient vectors obtained by performing least squares of vertices $i$ and $j$ on the potential parent sets $\hat{\mathcal{Z}}(i)$ and $\hat{\mathcal{Z}}(j)$, respectively. $\tau_{\hat{\mathcal{Z}}(i),\hat{\mathcal{Z}}(j)}$ is chosen so that
\begin{equation}
    \begin{aligned}
2&-F_{n-p}(\tau'_{\hat{\mathcal{Z}}(i),\hat{\mathcal{Z}}(j)} + n- p) 
+ F_{n-p}(-\tau'_{\hat{\mathcal{Z}}(i),\hat{\mathcal{Z}}(j)} + n- p)
    \\
    & -F_{n-q}(\tau'_{\hat{\mathcal{Z}}(i),\hat{\mathcal{Z}}(j)} + n- q) 
    + F_{n-q}(-\tau'_{\hat{\mathcal{Z}}(i),\hat{\mathcal{Z}}(j)} + n- q)
    \\
    &=\epsilon,
    \end{aligned}
\end{equation}
where $\tau'_{\hat{\mathcal{Z}}(i),\hat{\mathcal{Z}}(j)} = (\tau_{\hat{\mathcal{Z}}(i),\hat{\mathcal{Z}}(j)}- |q-p|\sigma^2)/2\sigma^2$, $p = |\hat{\mathcal{Z}}(i)|$ and $q=|\hat{\mathcal{Z}}|(j)$ and $F_l$ is the cdf of the chi-squared distributions with $l$-degrees of freedom.

Then, we have that
\begin{align}
    &\mathbb{P}_{\boldsymbol{A}}(\hat{\boldsymbol{\chi}}_{i,j}=1|\boldsymbol{X}^{\setminus i,j}) 
    \\&= \mathbb{P}_{\boldsymbol{A}}\bigg(\bigcap_{\hat{\mathcal{Z}}(i),\hat{\mathcal{Z}}(j)}|\hat{\sigma}^{2}_i - \hat{\sigma}^{2}_j| >  \tau_{\hat{\mathcal{Z}}(i),\hat{\mathcal{Z}}(j)}|\boldsymbol{X}^{\setminus i,j}\bigg)
    \\
    &\leq \mathbb{P}_{\boldsymbol{A}}\big(|\hat{\sigma}^{*2}_i - \hat{\sigma}^{*2}_j| >  \tau_{\mathcal{Z}(i),\mathcal{Z}(j)}|\boldsymbol{X}^{\setminus i,j}\big) 
    \\
    \begin{split}
        &\overset{(a)}{=} \mathbb{P}_{\boldsymbol{A}}\big(|\hat{\sigma}^{*2}_i - \hat{\sigma}^{*2}_j - (n-p)\sigma^2 + (n-q)\sigma^2 + (q-p)\sigma^2| 
        \\
        &\qquad \qquad \qquad \qquad \qquad \qquad  \qquad \qquad  >  \tau_{\mathcal{Z}(i),\mathcal{Z}(j)}|\boldsymbol{X}^{\setminus i,j}\big)
    \end{split}
    \\
    \begin{split}
        &\overset{(b)}{\leq} \mathbb{P}_{\boldsymbol{A}}\big(|\hat{\sigma}^{*2}_i - (n-p)\sigma^2| + |\hat{\sigma}^{*2}_j - (n-q)\sigma^2|
        \\
        &\qquad \qquad \qquad \qquad \qquad + |(q-p)\sigma^2| >  \tau_{\mathcal{Z}(i),\mathcal{Z}(j)}|\boldsymbol{X}^{\setminus i,j}\big)
    \end{split}
    \\
    \begin{split}
        &= \mathbb{P}_{\boldsymbol{A}}\big(|\hat{\sigma}^{*2}_i - (n-p)\sigma^2| + |\hat{\sigma}^{*2}_j - (n-q)\sigma^2| 
        \\ &\qquad \qquad \qquad \qquad \qquad >  \tau_{\mathcal{Z}(i),\mathcal{Z}(j)} - |q-p|\sigma^2|\boldsymbol{X}^{\setminus i,j}\big)
    \end{split}
    \\
    \begin{split}
        &\overset{(c)}{\leq} \mathbb{P}_{\boldsymbol{A}}\big(|\hat{\sigma}^{*2}_i - (n-p)\sigma^2|>  \frac{\tau_{\mathcal{Z}(i),\mathcal{Z}(j)} - |q-p|\sigma^2}{2}
        \\
        & \qquad \bigcup |\hat{\sigma}^{*2}_j - (n-q)\sigma^2| >  \frac{\tau_{\mathcal{Z}(i),\mathcal{Z}(j)} - |q-p|\sigma^2}{2}|\boldsymbol{X}^{\setminus i,j}\big)
    \end{split}
    \\
    \begin{split}
        &\overset{(d)}{\leq} \mathbb{P}_{\boldsymbol{A}}\Big(\Big|\frac{\hat{\sigma}^{*2}_i}{\sigma^2} - (n-p)\Big|>  \tau'_{\mathcal{Z}(i),\mathcal{Z}(j)}|\boldsymbol{X}^{\setminus i,j}\Big)
        \\
        & + \mathbb{P}_{\boldsymbol{A}}\Big(\Big|\frac{\hat{\sigma}^{*2}_j}{\sigma^2} - (n-q)\Big| >  \tau'_{\mathcal{Z}(i),\mathcal{Z}(j)}|\boldsymbol{X}^{\setminus i,j}\Big). \label{unionineq}
    \end{split}
\end{align}
Where $(a)$ holds since 
\begin{equation}
    - (n-p)\sigma^2 + (n-q)\sigma^2 + (q-p)\sigma^2 = 0.
\end{equation}
$(b)$ holds since by the triangle inequality we have
\begin{equation}
    \begin{aligned}
        &|\hat{\sigma}^{*2}_i - \hat{\sigma}^{*2}_j - (n-p)\sigma^2 + (n-q)\sigma^2 + (q-p)\sigma^2| 
        \\
        &\leq |\hat{\sigma}^{*2}_i - (n-p)\sigma^2| + |\hat{\sigma}^{*2}_j - (n-q)\sigma^2| + |(q-p)\sigma^2|,
    \end{aligned}
\end{equation}
and so if the left side is greater than $\tau_{\mathcal{Z}(i),\mathcal{Z}(j)}$, this implies that right side is as well. Hence,
\begin{equation}
    \begin{aligned}
    &\bigg\{\tau_{\mathcal{Z}(i),\mathcal{Z}(j)} < 
    \\
        &|\hat{\sigma}^{*2}_i - \hat{\sigma}^{*2}_j - (n-p)\sigma^2 + (n-q)\sigma^2 + (q-p)\sigma^2| \bigg\}
        \\
        &  \subseteq \bigg\{\tau_{\mathcal{Z}(i),\mathcal{Z}(j)} < |\hat{\sigma}^{*2}_i - (n-p)\sigma^2| 
        \\
        & \qquad \qquad \qquad \qquad + |\hat{\sigma}^{*2}_j - (n-q)\sigma^2| + |(q-p)\sigma^2|\bigg\}.
    \end{aligned}
\end{equation}
$(c)$ holds due to the following fact: for any real numbers $a$, $b$, and $c$, 
\begin{equation}
    a + b > c \implies a > \frac{c}{2} \text{ or } b > \frac{c}{2},
\end{equation}
and $(d)$ holds by the union bound and the definition $\tau'_{\mathcal{Z}(i),\mathcal{Z}(j)} = (\tau_{\mathcal{Z}(i),\mathcal{Z}(j)}- |q-p|\sigma^2)/2\sigma^2$. For simplicity, we turn our attention to 
\begin{equation}
    \mathbb{P}_{\boldsymbol{A}}\Big(\Big|\frac{\hat{\sigma}^{*2}_i}{\sigma^2} - (n-p)\Big|>  \tau'_{\mathcal{Z}(i),\mathcal{Z}(j)}\Big|\boldsymbol{X}^{\setminus i,j}\Big). \label{boundoni}
\end{equation}
First, begin by noticing \eqref{boundoni} can be written as
\begin{equation}
    \begin{aligned}
        & \mathbb{P}_{\boldsymbol{A}}\Big(\frac{\hat{\sigma}^{*2}_i}{\sigma^2} - (n-p)>  \tau'_{\mathcal{Z}(i),\mathcal{Z}(j)}
        \\
        & \qquad \qquad \bigcup \frac{\hat{\sigma}^{*2}_i}{\sigma^2} - (n-p)<  -\tau'_{\mathcal{Z}(i),\mathcal{Z}(j)}\Big|\boldsymbol{X}^{\setminus i,j}\Big),
    \end{aligned}
\end{equation}
which is further upper bounded by
\begin{equation}
    \begin{aligned}
        & \mathbb{P}_{\boldsymbol{A}}\Big(\frac{\hat{\sigma}^{*2}_i}{\sigma^2} - (n-p)>  \tau'_{\mathcal{Z}(i),\mathcal{Z}(j)}\Big|\boldsymbol{X}^{\setminus i,j}\Big)
        \\
        &+ \mathbb{P}_{\boldsymbol{A}}\Big(\frac{\hat{\sigma}^{*2}_i}{\sigma^2} - (n-p)<  -\tau'_{\mathcal{Z}(i),\mathcal{Z}(j)}\Big|\boldsymbol{X}^{\setminus i,j}\Big).
    \end{aligned}
\end{equation}
We turn our attention to the first-term
\begin{equation}
    \mathbb{P}_{\boldsymbol{A}}\Big(\frac{\hat{\sigma}^{*2}_i}{\sigma^2} - (n-p)>  \tau'_{\mathcal{Z}(i),\mathcal{Z}(j)}\Big|\boldsymbol{X}^{\setminus i,j}\Big).\label{finalprob}
\end{equation}
From Lemma \ref{RSSChi}, $\frac{\hat{\sigma}^{*2}_i}{\sigma^2}$ follows a chi-squared distribution with $n-p$ degrees of freedom conditioned on $\boldsymbol{X}^{\setminus i,j}$. We have that \eqref{finalprob} is equal to
\begin{equation}
    1-F_{n-p}(\tau'_{\mathcal{Z}(i),\mathcal{Z}(j)} + n- p),
\end{equation}
where $F_{n-p}$ is the cumulative distribution function (cdf) of a chi-squared distribution with $n-p$ degrees of freedom. Following a similar argument, we also get that
\begin{equation}
     \begin{aligned}
&\mathbb{P}_{\boldsymbol{A}}\Big(\frac{\hat{\sigma}^{*2}_i}{\sigma^2} < - \tau'_{\mathcal{Z}(i),\mathcal{Z}(j)}+ n-p\Big|\boldsymbol{X}^{\setminus i,j}\Big) 
\\
&= F_{n-p}(-\tau'_{\mathcal{Z}(i),\mathcal{Z}(j)} + n- p).
     \end{aligned}
\end{equation}
Following the same argument, we have that
\begin{equation}
    \begin{aligned}
        &\mathbb{P}_{\boldsymbol{A}}\Big(\Big|\frac{\hat{\sigma}^{*2}_j}{\sigma^2} - (n-q)\Big| >  \tau'_{\mathcal{Z}(i),\mathcal{Z}(j)}\Big|\boldsymbol{X}^{\setminus i,j}\Big)
        \\
        &\leq 1-F_{n-q}(\tau'_{\mathcal{Z}(i),\mathcal{Z}(j)} + n- q) 
        \\ &\qquad \qquad + F_{n-q}(-\tau'_{\mathcal{Z}(i),\mathcal{Z}(j)} + n- q).
    \end{aligned}
\end{equation}
Hence, from \eqref{unionineq} we have
\begin{equation}
    \begin{aligned}
&\mathbb{P}_{\boldsymbol{A}}(\hat{\boldsymbol{\chi}}_{i,j}=1|\boldsymbol{X}^{\setminus i,j}) \leq
 2-F_{n-q}(\tau'_{\mathcal{Z}(i),\mathcal{Z}(j)}+n- q) 
\\
&- F_{n-p}(\tau'_{\mathcal{Z}(i),\mathcal{Z}(j)}+n- p) + F_{n-q}(-\tau'_{\mathcal{Z}(i),\mathcal{Z}(j)}+n- q)
    \\
    & \qquad \qquad \qquad \qquad \qquad \qquad + F_{n-p}(-\tau'_{\mathcal{Z}(i),\mathcal{Z}(j)}+n- p). \label{finalineq}
    \end{aligned}
\end{equation}
From the definition of $\tau_{\mathcal{Z}(i),\mathcal{Z}(j)}$, the right-hand side of \eqref{finalineq} is equal to $\epsilon$, which completes the proof.
\end{proof}

\subsection{Proof of Lemma \ref{chisquaredlemma}}\label{ProofChiSquared}
For notational simplicity and readability, we restate the lemma with simplified notation.
\begin{lemma}\label{RSSChi}
    Let $\{X_k\}_{k=1}^n$ be a \textbf{fixed} dataset of $p$-dimensional vectors. Let $\boldsymbol{X}$ be the matrix constructed by taking the $k$th row as $X_k^\top$. Let $y_k$ be defined as 
    \begin{equation*}
        y_k = X_k^\top \alpha + w_k,
    \end{equation*}
    where $\alpha$ is a vector of coefficients and $w_k$ is zero mean Gaussian noise with variance $\sigma^2$. Furthermore, let $\hat{\alpha} = (\boldsymbol{X}^\top\boldsymbol{X})^{-1}\boldsymbol{X}^\top Y$, i.e., the ordinary least squares estimate of $\alpha$, where $Y = [y_1, y_2, ..., y_n]^\top$. Then,
    \begin{equation*}
        \frac{1}{\sigma^2}\sum_{k=1}^n (y_k - X_k\hat{\alpha})^2,
    \end{equation*}
    follows a chi-squared distribution with $n-p$ degrees of freedom.
\end{lemma}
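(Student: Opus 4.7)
\subsection*{Proof Plan for Lemma \ref{RSSChi}}

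The plan is to recast the residual sum of squares as a quadratic form in a Gaussian vector, recognize the associated matrix as an orthogonal projector of known rank, and then diagonalize it so the quadratic form becomes a sum of $n-p$ independent standard normal squares.

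First, I would rewrite the sum of squared residuals in matrix form. Letting $\boldsymbol{X}$ be the $n \times p$ design matrix and $Y = \boldsymbol{X}\alpha + W$ with $W \sim \mathcal{N}(0, \sigma^2 I_n)$, the OLS fitted values are $\hat{Y} = \boldsymbol{X}\hat{\alpha} = HY$, where $H \;\doteq\; \boldsymbol{X}(\boldsymbol{X}^\top \boldsymbol{X})^{-1}\boldsymbol{X}^\top$ is the usual hat matrix. Hence the residual vector equals $\hat{e} = (I_n - H)Y$, and
\begin{equation*}
\sum_{k=1}^n (y_k - X_k^\top \hat{\alpha})^2 \;=\; Y^\top (I_n - H) Y.
\end{equation*}
Using the identity $(I_n - H)\boldsymbol{X} = 0$, substituting $Y = \boldsymbol{X}\alpha + W$ kills the deterministic term, yielding $\hat{e} = (I_n - H)W$ and therefore $\|\hat{e}\|^2 = W^\top (I_n - H) W$, since $I_n - H$ is symmetric and idempotent.

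Next, I would exploit the spectral structure of $I_n - H$. Because $H$ is the orthogonal projector onto the column space of $\boldsymbol{X}$ (which, by the invertibility of $\boldsymbol{X}^\top \boldsymbol{X}$, has dimension $p$), the matrix $I_n - H$ is the orthogonal projector onto its $(n-p)$-dimensional orthogonal complement. Writing its spectral decomposition $I_n - H = U \Lambda U^\top$ with $U$ orthogonal and $\Lambda = \operatorname{diag}(1,\dots,1,0,\dots,0)$ having exactly $n-p$ ones, define $Z = U^\top W / \sigma$. Since $W \sim \mathcal{N}(0, \sigma^2 I_n)$ and $U$ is orthogonal, $Z \sim \mathcal{N}(0, I_n)$. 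Then
\begin{equation*}
\frac{1}{\sigma^2} \sum_{k=1}^n (y_k - X_k^\top \hat{\alpha})^2 \;=\; \frac{1}{\sigma^2} W^\top U \Lambda U^\top W \;=\; Z^\top \Lambda Z \;=\; \sum_{i=1}^{n-p} Z_i^2,
\end{equation*}
which is, by definition, a chi-squared random variable with $n-p$ degrees of freedom.

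I do not expect serious obstacles: each step is standard linear-algebra or Gaussian-distribution manipulation. The only point requiring a small amount of care is justifying that $I_n - H$ has rank exactly $n-p$; this follows from the implicit assumption, needed for $\hat{\alpha}$ to be well defined, that $\boldsymbol{X}^\top \boldsymbol{X}$ is invertible, so $\operatorname{rank}(\boldsymbol{X}) = p$ and $\operatorname{rank}(H) = p$. Because $\{X_k\}_{k=1}^n$ is treated as fixed, the argument is purely a statement about conditional Gaussian quadratic forms, which is exactly the form in which Lemma \ref{chisquaredlemma} is applied in the proof of Theorem \ref{satisfyconstraint}.
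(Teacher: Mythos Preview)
Your proposal is correct and follows essentially the same route as the paper's own proof: both write the residual vector as $(I-H)W$ using the annihilation property $(I-H)\boldsymbol{X}=0$, diagonalize the projector $I-H$ by an orthogonal matrix, and conclude that the scaled residual sum of squares is a sum of $n-p$ independent standard normal squares. The only cosmetic difference is that the paper computes $\operatorname{rank}(I-H)=n-p$ via the trace identity, whereas you argue it from $\operatorname{rank}(\boldsymbol{X})=p$; both justifications are standard and equivalent.
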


\begin{proof}[Proof of Lemma \ref{RSSChi}]
    We begin by defining the vector of residuals as
    \begin{align}
        \hat{W} = Y -\boldsymbol{X} \hat{\alpha}  
        = \boldsymbol{Q}Y = \boldsymbol{Q}( \boldsymbol{X}\alpha + W),
    \end{align}
    where $\boldsymbol{Q} = I-\boldsymbol{X}(\boldsymbol{X}^\top\boldsymbol{X})^{-1}\boldsymbol{X}^\top$ and $W = [w_1, w_2, ..., w_n]^\top$. Notice that $\boldsymbol{Q}\boldsymbol{X} = 0$, and that $\boldsymbol{Q}$ is a projection matrix. Hence, $\boldsymbol{Q}^2 = \boldsymbol{Q}$ which implies $\hat{W}$ is a Gaussian vector with zero mean and variance $\sigma^2 \boldsymbol{Q}$. Then, there exists some unitary matrix $\boldsymbol{U}$ that diagonalizes $\boldsymbol{Q}$  Moreover, notice that
    \begin{align}
        \text{Trace}(Q)  &\overset{(a)}{=} \text{Trace}(I-\boldsymbol{X}(\boldsymbol{X}^\top\boldsymbol{X})^{-1}\boldsymbol{X}^\top) 
         = n-p
        %&\overset{(a)}{=} n - \text{Trace}((\boldsymbol{X}^\top\boldsymbol{X})^{-1}\boldsymbol{X}^\top\boldsymbol{X}) = n-p,
    \end{align}
    where $(a)$ holds since the trace is invariant to cyclic shifts. Then, if we define $\hat{W}' = \hat{W}^\top \boldsymbol{U}$, we see that $\hat{W}'$ is a Gaussian vector with zero mean and covariance $\sigma^2\boldsymbol{U}^\top \boldsymbol{Q}^\top \boldsymbol{Q} \boldsymbol{U}$ which from the previous calculations is equal to $\sigma^2\boldsymbol{U}^\top \boldsymbol{Q} \boldsymbol{U} = \sigma^2 \Lambda$,  where $\Lambda$ is the diagonal matrix of the eigenvalues of $\boldsymbol{Q}$, which can only be 0 or 1 due to $\boldsymbol{Q}$ being a projection matrix. Then, we have that $p$ of the entries of $\hat{W}'$ are equal to zero, and so $\|\hat{W}'\|_2^2/\sigma^2$ is the sum of the squares $n-p$ independent Gaussian random variables each with mean zero and variance one, which is exactly a chi-squared distribution with $n-p$ degrees of freedom. Finally, notice that $\|\hat{W} \|_2^2 = \|\boldsymbol{U}^\top\hat{W} \| = \|\hat{W}'\|_2^2$.
\end{proof}

%\vspace*{-0.05in}
%\newpage

\bibliographystyle{IEEEtran}
 \bibliography{refs_shaska_updated}

\end{document}